\newtheorem{theorem}{Theorem}
\newtheorem{proposition}[theorem]{Proposition}
\newtheorem{corollary}[theorem]{Corollary}
\newtheorem{lemma}[theorem]{Lemma}
\theoremstyle{definition}
\newtheorem{definition}[theorem]{Definition}
\begin{document}
\title{On asymptotically good ramp secret sharing schemes
\footnote{Part of this paper was presented at the Ninth International
  Workshop on Coding and Cryptography (WCC 2015).}}

\author[1]{Olav Geil\thanks{olav@math.aau.dk}}
\author[1]{Stefano Martin\thanks{stefano@math.aau.dk}}
\author[1]{Umberto Mart\'inez-Pe\~nas\thanks{umberto@math.aau.dk}}
\author[1,2]{Ryutaroh Matsumoto\thanks{ryutaroh@it.ce.titech.ac.jp}}
\author[1]{Diego Ruano\thanks{diego@math.aau.dk}}
\affil[1]{Department of Mathematical
  Sciences, Aalborg University}
\affil[2]{Department of Information and Communications Engineering, Tokyo Institute of Technology}

\maketitle

\begin{abstract}
Asymptotically good sequences of linear ramp secret sharing schemes have been
intensively studied by Cramer et al.\ in terms of sequences of pairs of nested algebraic geometric codes~\cite{c2,MR3225937,MR2422182,Chen,Cramer-Cascudo,cramer3}. 
In those works the
focus is on full privacy and full reconstruction. In this paper we
analyze additional parameters describing the asymptotic behavior of partial information leakage
and possibly also partial reconstruction giving a more complete
picture of the access structure for sequences of linear ramp secret
sharing schemes. Our study involves a detailed treatment of the (relative)
generalized Hamming weights of the considered codes.\\

\noindent {\bf{Keywords:}} Algebraic geometric codes, generalized Hamming weights, relative generalized Hamming weights, secret sharing.
\end{abstract}

\section{Introduction}\label{sec1}
A secret sharing scheme \cite{shamir,blakleyoprindelig,blakley,yamamoto} is a cryptographic
method to encode a secret $ \mathbf{s} $ into multiple shares $ c_1,
 \ldots , c_n $ so that only from specified subsets of the shares
one can recover $ \mathbf{s} $. Often it is assumed that $ n $
participants each receive a share, no two different participants
receiving the same. The secret and the share vector $\mathbf{c}=(c_1,
 \ldots , c_n) $ corresponding to it are assumed to be taken at
random with some given distributions (usually uniform), and the
recovery capability of a set of shares is measured from an
information-theoretical point of view \cite{yamamoto}.  The term ramp secret sharing
scheme \cite{yamamoto,blakley,Chen} is used for those schemes where some
sets of shares partially determine the secret, but not
completely. This allows the shares to be of smaller size than the
secret. 

In this paper, we concentrate on linear ramp secret sharing schemes
with uniform distribution on the secret and uniform distribution on
the share vector conditioned to the secret, which is widely considered
in the literature (see, for instance,
\cite{MR2422182,Chen,rghw2014,kurihara}). Here,
the secret is a vector $ \mathbf{s} \in \mathbb{F}_q^\ell $ (for some
finite field $ \mathbb{F}_q $), and we assume that the shares are
elements $ c_1,  \ldots, c_n \in \mathbb{F}_q $. The term linear means that a linear combination of share vectors is a
share vector of the corresponding linear combination of
secrets. In~\cite[Sec.\ 4.2]{Chen} it was shown that such schemes
are equivalent to the following construction based on two nested
linear codes $C_2 \subsetneq C_1 \subset {\mathbb{F}}_q^n $ with $\dim
C_1-\dim C_2=\ell$. Writing $k_2 =\dim C_2$ and $k_1=\dim C_1$ (and
consequently $\ell=k_1-k_2$) let $\{
\mathbf{b}_1, \ldots , \mathbf{b}_{k_2}\}$ be a basis for $C_2$ and
extend it to a basis $\{\mathbf{b}_1, \ldots , \mathbf{b}_{k_1}\}$ for
$C_1$. A secret $\mathbf{s}=(s_1, \ldots , s_\ell)$ is encoded by
first choosing at random coefficients $a_1, \ldots , a_{k_2} \in
{\mathbb{F}}_q$ and then letting the share vector be
\begin{equation}
\mathbf{c}=a_1\mathbf{b}_1 + \cdots +
a_{k_2}\mathbf{b}_{k_2}+s_1\mathbf{b}_{k_2+1}+\cdots +s_\ell \mathbf{b}_{k_1}.
\label{eqm1}
\end{equation}
Define a $ q $-bit of information to be $ \log_2(q) $ bits of information. Then, for the schemes that we consider, the mutual information between the secret and a set of shares is an integer between $ 0 $ and $ \ell $ if measured in $ q $-bits \cite[Proof of Th.\ 4]{kurihara}. Therefore, for each $ m = 1, \ldots, \ell $, we may define the following threshold values \cite[Def.\ 2]{rghw2014}: 
\begin{itemize} 
\item The $ m $-th privacy threshold of the scheme is the maximum integer $
  t_m $ such that from no set of $ t_m $ shares one can recover $ m $
  $ q $-bits of information about the secret. That is, $t_m = \max\{
  \# J \mid J \subseteq \{1, \ldots , n\}, I(J) < m \}$,
    where $I(J) = 
    I\big( s_1$, \ldots, $s_\ell$ ;
    $( c_i \mid i \in J )\big)$. Here,  $c_i$ is the $i$-th component of
    $\mathbf{c}$ in~(\ref{eqm1}), and
  $I(;)$ is the mutual information taking logarithms in base $q$.
\item The $ m $-th reconstruction threshold of the scheme is the minimum
  integer $ r_m $ such that from any set of $ r_m $ shares one can obtain $ m $ $ q $-bits of information about $ \mathbf{s} $.
  That is, $r_m = \min\{ \# J \mid J \subseteq \{1, \ldots , n\}, I(J) \geq m \}$.
\end{itemize}
The numbers $ t = t_1 $ and $ r = r_\ell $ have been intensively
studied in the literature, e.g.\ \cite{blakley,Chen,yamamoto}, where they are called
privacy and reconstruction threshold,
respectively. Clearly $t$ is the greatest number such that no set of
$t$ shares holds any information on the secret and $r$ is the smallest
number such that from any set of $r$ shares one can reconstruct the
information in full. In a series of papers the asymptotic behavior of such
parameters has been investigated
\cite{c2,MR3225937,MR2422182,Chen,Cramer-Cascudo,cramer3} in terms
of corresponding infinite sequences of nested code pairs of increasing
length. In the present paper we take a
particular interest in sequences of nested code pairs
$(C_2(i) \subsetneq C_1(i) \subset {\mathbb{F}}_q^{n_i})_{i=1}^\infty$
with $n_i$ and with   
$\ell_i =\dim C_1(i)-\dim C_2(i)$ satisfying 
\begin{equation}
\lim_{i \rightarrow \infty}n_i = \infty ,{\mbox{ \
    and \  }} \liminf_{i
  \rightarrow \infty} ( \ell_i
/n_i)=L\label{eqLfixed}
\end{equation}
for some fixed $0 < L < 1$,
see~\cite{c2,MR3225937,MR2422182,Chen,Cramer-Cascudo,cramer3}. The
reason for us to require~(\ref{eqLfixed}) is to obtain a constant
information rate. For instance if the schemes are to be used in
connection with distributed storage as mentioned in~\cite{yamamoto}
then a memory of size $1/L$ times the information size is enough. As in
the above listed papers the focus in on full privacy and full
reconstruction, what is studied there is
\begin{equation}
\liminf_{i \rightarrow \infty }\frac{t}{n_i}=\Omega^{(1)} {\mbox { \ and \
  }} \limsup_{i \rightarrow \infty}\frac{r}{n_i}=\Omega^{(2)}
.\label{eqhuskmig}
\end{equation}
Here, $t$ and $r$ are the privacy and reconstruction thresholds for
the schemes based on $C_2(i) \subsetneq C_1(i) \subset
{\mathbb{F}}_q^{n_i}$, and thereby are functions in $i$. For any
chosen value of $L$ and corresponding feasible $\Omega^{(1)}$ it is desirable to have
the threshold gap $\Omega^{(2)}-\Omega^{(1)}$ as small as
possible. One way of achieving
this \cite{c2,MR3225937,MR2422182,Chen,Cramer-Cascudo,cramer3} is to
base the secret sharing schemes on sequences of nested code pairs
related to an optimal tower of function fields and to require $\lim_{i
  \rightarrow \infty}(\dim C_1(i)/n_i)=R_1$ and $\lim_{i
  \rightarrow \infty}(\dim C_2(i)/n_i)=R_2$ for some fixed rates $R_1 > R_2$. Using the Goppa
bound~\cite{handbook} one then obtains good parameters $L=R_1-R_2$, $\Omega^{(2)}$ and
$\Omega^{(1)}$. For future reference we formalize the concept of
asymptotic goodness in a
definition, where for completeness we also include the case $L=0$,
although we do not study this case in the present paper.
\begin{definition}\label{defas}
Let $0 < R_2 \leq R_1 < 1$ and consider a sequence of nested codes
$(C_2(i) \subsetneq C_1(i)\subset {\mathbb{F}}_q^{n_i})_{i=1}^\infty$
with $n_i \rightarrow \infty$, $\dim C_2(i)/n_i \rightarrow R_2$ and
$\dim C_1(i)/n_i \rightarrow R_1$ for $i \rightarrow \infty$. The
corresponding sequence of linear ramp secret sharing schemes is said
to be asymptotically good if the parameters from~(\ref{eqhuskmig})
satisfy $0 < \Omega^{(1)}$ and $\Omega^{(2)} < 1$.
\end{definition}

The purpose of the present paper is to
provide additional information on the access structure of
sequences of linear ramp secret sharing schemes by studying
partial information leakage and partial reconstruction parameters. More precisely, given a sequence of linear ramp secret
sharing schemes and any fixed numbers $0 \leq
\varepsilon_1, \varepsilon_2 \leq 1$ we study the asymptotic parameters
\begin{eqnarray}
\Lambda^{(1)}(\varepsilon_1)&=&\sup \big\{ \liminf_{i
  \rightarrow \infty}\frac{t_{m_1(i)}}{n_i} \mid (m_1(i))_{i=1}^\infty
{\mbox { satisfies }} \nonumber \\
&&{\mbox{ \hspace{0.5cm} }}1 \leq m_1(i) \leq \ell_i, \lim_{i \rightarrow
  \infty} (m_1(i)/n_i)=\varepsilon_1L \big\}, \nonumber \\
\Lambda^{(2)}(\varepsilon_2)&=&\inf \big\{ \limsup_{i
  \rightarrow \infty}\frac{r_{\ell_i -m_2(i)+1}}{n_i} \mid (m_2(i))_{i=1}^\infty
{\mbox { satisfies }} \nonumber \\
&&{\mbox{ \hspace{0.5cm} }}1 \leq m_2(i) \leq \ell_i, \lim_{i \rightarrow
  \infty} (m_2(i)/n_i)=\varepsilon_2L \big\} .\nonumber
\end{eqnarray}
Such
parameters tell us that asymptotically no fraction less than
$\Lambda^{(1)}(\varepsilon_1)$ 
of the shares holds more information on the secret than a fraction $\varepsilon_1$. Similarly, from any fraction greater than
$\Lambda^{(2)}(\varepsilon_2)$ of the shares one can gain information on
the secret corresponding to a fraction $1-\varepsilon_2$ or more. Of particular interest is $\Lambda^{(1)}(0)$ which ensures almost
full privacy. It is a surprising fact that for secret sharing schemes
based on algebraic geometric codes this number can be
significantly larger than $\Omega^{(1)}$, meaning that such schemes
are more secure than anticipated (see Section~\ref{secnewas} and Theorem~\ref{final2}). The situation is similar
with regards to reconstruction. In another direction, for fixed values
of $L$ and corresponding feasible $\Lambda^{(1)}(\varepsilon_1)$ we determine for the general
class of ramp secret sharing schemes the
smallest value $\Lambda^{(2)}(\varepsilon_2)$ such that a sequence of
codes with these parameters exists. This bound -- which
can be seen as an asymptotic Singleton bound for linear ramp secret sharing
schemes -- is then by a non-constructive proof shown to be
achievable, but unfortunately, we obtain no information regarding
$\Omega^{(1)}$ and $\Omega^{(2)}$ for those sequences. 

Sequences
of linear ramp secret sharing schemes based on algebraic geometric
codes defined from optimal towers of function
fields are interesting for the following three reasons. Firstly, for such sequences the parameters
$L$, 
$\Omega^{(1)}$ and   $\Omega^{(2)}$ are simultaneously good. Also
$\Lambda^{(1)}(\varepsilon_1)$ and $\Lambda^{(2)}(\varepsilon_2)$ 
are good, although they do not always reach the Singleton bound. Secondly,
such sequences are constructible if $q$ is a perfect square and are
semi-constructible if not. Finally, as demonstrated in~\cite{c2,MR3225937,MR2422182,Chen,Cramer-Cascudo,cramer3}
examples of such sequences are important in connection with secure
multiparty computation due to nice properties on the componentwise
product of share vectors.

Our analysis of the asymptotic secret sharing parameters is based on
the material in \cite{rghw2014,kurihara} which
translates information-theoretical properties of a ramp secret sharing
scheme based on nested linear codes $C_2 \subsetneq C_1 \subset
{\mathbb{F}}_q^n$ into coding-theoretical properties of the nested
codes. In particular, bounding generalized Hamming weights \cite{wei} of $ C_1 $ and $ C_2^\perp $ and relative generalized
Hamming weights \cite{luo} of the pairs $ C_2
\varsubsetneq C_1 $ and $ C_1^\perp \varsubsetneq C_2^\perp $ implies
bounds on the privacy and reconstruction numbers $t_i$ and $r_i$.

The paper is organized as follows. In Section \ref{sec3} we give
the Singleton bound for linear ramp secret sharing schemes. Using
the material from~\ref{app1} we then 
show that for arbitrary $L$,  
sequences of schemes exist such that for arbitrary $\varepsilon_1, \varepsilon_2$ 
one
gets arbitrarily close to the Singleton bound for $\Lambda^{(1)}(\varepsilon_1)$ and $\Lambda^{(2)}(\varepsilon_2)$. In
Section~\ref{secnewas} we then discuss how to obtain sequences of ramp secret
sharing schemes with good values of $L$, $\Omega^{(1)}$ and $\Omega^{(2)}$
from optimal towers of function fields.  
As a preparation step to treat later in the paper
$\Lambda^{(1)}(\varepsilon_1)$ and $\Lambda^{(2)}(\varepsilon_2)$ for
these sequences of schemes we next study relative generalized Hamming weights of
algebraic geometric codes in
Section~\ref{sec4} and derive asymptotic
consequences in Section~\ref{sec5}. Then finally in Section~\ref{sec7}
we collect our findings into information on $\Omega^{(1)}$, $\Omega^{(2)}$, $\Lambda^{(1)}(\varepsilon_1)$ and
$\Lambda^{(2)}(\varepsilon_2)$ for sequences of ramp secret sharing
schemes based on algebraic geometric codes coming from optimal towers
of function fields.

\section{The Singleton bound}\label{sec3}
The code parameters governing the privacy and reconstruction
numbers $t_m$ and $r_m$ of linear ramp secret sharing schemes are the relative generalized Hamming weights~\cite{luo} which we now
define together with the generalized Hamming weights~\cite{wei}.

\begin{definition}
Consider $C_2 \subsetneq C_1 \subset {\mathbb{F}}_q^n$ and let $\ell
=k_1-k_2$ where $k_1 =\dim C_1$ and $k_2=\dim C_2$. For $m=1, \ldots ,
\ell$ the $m$-th relative generalized Hamming weight (RGHW) is:
\begin{eqnarray}
{\mbox{\hspace{-0.5cm}}}M_m(C_1,C_2) &=& \min \{  \# {\rm Supp}(D) \mid D \subset C_1 \textrm{
                 is a linear space} \nonumber  \\
 && {\mbox{ \ \ \ \ \ \  with }}\dim(D) = m \textrm{ and } D \cap C_2 = \{ \mathbf{0} \} \},\nonumber
\end{eqnarray}
where $ {\rm Supp}(D) = \{ i \in \{ 1,2, \ldots, n \} \mid \exists
\mathbf{d} \in D, d_i \neq 0 \} $. For $ m = 1,2, \ldots, k_1 $, the $
m $-th generalized Hamming weight (GHW) of $ C_1 $ is defined as $ d_m(C_1) = M_m(C_1, \{ \mathbf{0} \}) $.
\end{definition}

Clearly, the RGHWs can be lower bounded by the GHWs of the same
index, and as the latter are often easier to estimate we shall also take
an interest in them. The following theorem, which is \cite[Th.\ 3]{rghw2014}, gives a characterization of the threshold numbers $ t_m $ and $ r_m $ in terms of the RGHWs of the pairs $ C_2 \varsubsetneq C_1 $ and $ C_1^\perp \varsubsetneq C_2^\perp $, where $ C^\perp $ denotes the dual of the linear code $ C $.

\begin{theorem} \label{thresholds and RGHW}
Consider a linear ramp secret sharing scheme based on codes $C_2
\subsetneq C_1 \subset {\mathbb{F}}_q^n$. Then for $ m = 1,2, \ldots, \ell $,
\begin{eqnarray}
t_m&=&M_m(C_2^\perp, C_1^\perp)-1 \nonumber, \textrm{ and} \\
r_m&=&n-M_{\ell-m+1}(C_1,C_2)+1. \nonumber
\end{eqnarray}
\end{theorem}

Observe, that as a consequence we obtain $t_m \geq d(C_2^\perp)-1$ and
$r_m \leq n- d_{\ell-m +1}(C_1)+1$. Given a sequence of linear ramp
secret sharing schemes satisfying~(\ref{eqLfixed}), numbers $0 \leq
\varepsilon_1, \varepsilon_2 \leq 1$ and any two sequences
$(m_1(i))_{i=1}^\infty$ and $(m_2(i))_{i=1}^\infty$ with $\lim_{i
  \rightarrow \infty} (m_1(i)/n_i) \rightarrow \varepsilon_1 L$ and $\lim_{i
  \rightarrow \infty} (m_2(i)/n_i) \rightarrow \varepsilon_2 L$ we then 
obtain
\begin{eqnarray}
\Omega^{(1)}&=&\liminf_{i\rightarrow \infty}\frac{M_1(C_2^\perp ,
                C_1^\perp)}{n_i} \geq \liminf_{i \rightarrow \infty}
                \frac{d(C_2^\perp)}{n_i} \label{eqomega1bnd}\\
\Omega^{(2)}&=&1-\liminf_{i\rightarrow \infty}\frac{M_1(C_1 ,
                C_2)}{n_i} \nonumber \\
& \leq & 1- \liminf_{i \rightarrow \infty}
                \frac{d(C_1)}{n_i} \label{eqomega2bnd}\\
\Lambda^{(1)}(\varepsilon_1)&\geq&\liminf_{i \rightarrow \infty}
                                \frac{M_{m_1(i)}(C_2^\perp,C_1^\perp)}{n_i}
  \label{eqlambda1bnd1} \\
                                &\geq &\liminf_{i \rightarrow \infty}
                                \frac{d_{m_1(i)}(C_2^\perp)}{n_i}
                                \label{eqlambda1bnd2} \\
\Lambda^{(2)}(\varepsilon_2)&\leq&1-\liminf_{i \rightarrow \infty}
                                \frac{M_{m_2(i)}(C_1,C_2)}{n_i}
                                \label{eqlambda2bnd1} \\
&\leq& 1- \liminf_{i \rightarrow \infty}
                                \frac{d_{m_2(i)}(C_1)}{n_i}
                                \label{eqlambda2bnd2}
\end{eqnarray}
To study the optimality of linear ramp secret sharing schemes we
recall the Singleton bound \cite[Section IV]{luo} for a linear code pair $ C_2 \varsubsetneq C_1 \subset \mathbb{F}_q^n $ and its dual pair $ C_1^\perp \varsubsetneq C_2^\perp \subset \mathbb{F}_q^n $: for each $ m = 1,2, \ldots, \ell $,
\begin{equation}
M_m(C_1,C_2) \leq n - k_1 + m, \quad \textrm{and} \quad M_m(C_2^\perp,C_1^\perp) \leq k_2 + m.
\label{singleton bound}
\end{equation}
From these bounds and Theorem \ref{thresholds and RGHW}, it follows
that $r_m \geq k_2 + m$ and  $t_m \leq k_2 + m -1$,
and as a consequence
\begin{equation}
\Omega^{(2)}-\Omega^{(1)} \geq L \label{eqdiffgam}
\end{equation}
and 
\begin{equation}
\Lambda^{(2)}(\varepsilon_2)-\Lambda^{(1)}(\varepsilon_1) \geq L(1-\varepsilon_1-\varepsilon_2).\label{eqdifflam}
\end{equation}
There exist choices of $\Omega^{(1)}<\Omega^{(2)}$ such
that~(\ref{eqdiffgam}) is not nearly tight, meaning that $L$ cannot be
close to $\Omega^{(2)}-\Omega^{(1)}$~\cite[Th.\ 3.26, Th.\ 4.6]{c2}. It is therefore surprising that for any fixed value of
$\Lambda^{(1)}(0) < \Lambda^{(2)}(0)$ there exist sequences of linear ramp secret
sharing schemes with $L$ arbitrarily close to $\Lambda^{(2)}(0)-\Lambda^{(1)}(0)$. Even more, by the
strict monotonicity of RGHWs~\cite[Pro.\ 2]{luo}, for such schemes $L(1-\varepsilon_1-\varepsilon_2)$ becomes arbitrarily
close to $\Lambda^{(2)}(\varepsilon_2)-\Lambda^{(1)}(\varepsilon_1)$
for all $0 \leq \epsilon_1, \epsilon_2 \leq 1$. Our proof is non-constructive, as might be expected, and it
unfortunately does not reveal any non-trivial information on the corresponding
values of $\Omega^{(1)}$ and $\Omega^{(2)}$. We leave it for further
research to determine simultaneous information on these parameters,
and in particular to decide if the sequences fulfill the requirements
in Definition~\ref{defas} for being asymptotically good. In~\ref{app1} we prove
the following
result:

\begin{theorem} \label{th extended}
For $0 \leq R_2 < R_1 \leq 1$, $0 \leq \delta \leq 1$, $0 \leq
\delta^\perp \leq 1$, $0 < \tau \leq
\min \{ \delta, R_1-R_2\}$ and $0 < \tau^\perp \leq \min \{
\delta^\perp, R_1-R_2 \}$, if 
\begin{eqnarray}
R_1+\delta < 1+\tau {\mbox{ \  and \  }} (1-R_2)+\delta^\perp < 1+ \tau^\perp, \label{eqnummer8}
\end{eqnarray}
then for any prime power $q$ there exists an infinite sequence of
nested linear code pairs $C_2(i) \subsetneq C_1(i) \subset
{\mathbb{F}}_q^{n_i}$, where $n_i \rightarrow \infty$ for $i
\rightarrow \infty$, and where
$$\lim_{i \rightarrow \infty}\frac{\dim (C_1(i))}{n_i}=R_1, $$
$$\lim_{i \rightarrow \infty}\frac{\dim (C_2(i))}{n_i}=R_2, $$
$$\liminf_{i \rightarrow \infty}\frac{M_{\lceil n_i \tau \rceil}(C_1(i),C_2(i))}{n_i}\geq\delta, \textrm{ and} $$ 
$$\liminf_{i \rightarrow \infty}\frac{M_{\lceil n_i \tau^\perp
    \rceil}(C_2(i)^\perp,C_1(i)^\perp)}{n_i}\geq  \delta^\perp.$$ 
\end{theorem}

As a corollary we see that the difference in (\ref{eqdifflam})
  can become arbitrarily close to zero.

\begin{corollary}
For any $0 < R_2 < R_1 < 1$ there
exists a sequence of linear ramp secret sharing schemes
satisfying~(\ref{eqLfixed}) with $L=R_1-R_2$ and having
simultaneous $\Lambda^{(1)}(\varepsilon_1)$ arbitrarily close to $R_2+\varepsilon_1 L$
and $\Lambda^{(2)}$ arbitrarily close to $R_1-\varepsilon_2L$ for all
$0\leq \varepsilon_1, \varepsilon_2 \leq 1$. 
\end{corollary}
\begin{proof}
As noted prior to Theorem~\ref{th extended} by the strict
monotonicity of the RGHWs it is enough to prove
$L=R_1-R_2$ and that  $\Lambda^{(1)}(0)$ can be arbitrarily close
to $R_2$ simultaneously with $\Lambda^{(2)}(0)$ being arbitrarily
close to $R_1$. We start
by proving a result which at a first glance seems weaker -- but from
which the above will follow. Let $0 < \varepsilon \leq \min \{
R_1/L,(1-R_2)/L\}$ and choose arbitrarily small $\mu >0$. In
Theorem~\ref{th extended} choose $\tau=\tau^\perp=\varepsilon L$,
$\delta=1-R_1+\varepsilon L-\mu$ and $\delta^\perp =R_2+\varepsilon L-\mu$.
By inspection all the conditions of the theorem
are satisfied and therefore by~(\ref{eqlambda1bnd1}) and
(\ref{eqlambda2bnd1}) for any $\varepsilon$
in the considered interval there exists a sequence of linear ramp secret sharing schemes
satisfying~(\ref{eqLfixed}) such that $\Lambda^{(1)}(\varepsilon)$ is arbitrarily
close to $R_2+\varepsilon L$ simultaneously with
$\Lambda^{(2)}(\varepsilon)$ being arbitrarily close to $R_1-\varepsilon
L$. The theorem finally follows by considering a sequence of numbers
$(\varepsilon(i))_{i=1}^\infty$  between $0$ and $\min \{R_1/L,(1-R_2)/L\}$ and with $\lim_{i
  \rightarrow \infty} \varepsilon(i)=0$. For each $\varepsilon(i)$ we have a
sequence ${\mathcal{S}}(i)$ of secret sharing schemes as described above. Now build a new
sequence of schemes in which the $i$-th scheme is the $i$-th scheme
from the sequence ${\mathcal{S}}(i)$. The resulting scheme satisfies the requirement
mentioned at the beginning of the proof.
\end{proof}
\section{Asymptotically good sequences of  schemes from algebraic
  geometric codes}\label{secnewas}

In the remaining part of the paper we concentrate on ramp secret sharing schemes
defined from pairs of nested algebraic geometric codes. In
the present section we collect known information to describe what is
possible concerning the parameters $L$, $\Omega^{(1)}$ and
$\Omega^{(2)}$. In subsequent sections we then derive information on $\Lambda^{(1)}(\varepsilon_1)$ and
$\Lambda^{(2)}(\varepsilon_2)$. 

Let $ \mathcal{F} $ be an algebraic function field over $ \mathbb{F}_q
$ of transcendence degree one. In the rest of the paper we consider divisors $ D = P_1  + \cdots +
P_n $ and $ G $ with disjoint supports, where the places $ P_i $ are
rational and pairwise distinct. For any divisor $ E $, we define the
Riemann-Roch space $ \mathcal{L}(E) $ of functions $ f \in \mathcal{F}
$ such that the divisor $ (f) + E $ is effective (see also
\cite[Def.\ 2.36]{handbook}). We denote by $ C_\mathcal{L}(D,G) $
the evaluation code of length $ n $ obtained by evaluating functions $
f \in \mathcal{L}(G) $ in the places $ P_i $. An algebraic geometric
code  is a code of the form $ C_\mathcal{L}(D,G) $ or $
C_\mathcal{L}(D,G)^\perp $. We call the first primary algebraic
geometric codes and the latter dual. The well-known Goppa bound~\cite[Th.\ 
2.65]{handbook} gives information on the relation between dimension
and minimum distance for primary or dual codes.

\begin{theorem} \label{thegoppa}
Let $ C $ be an algebraic geometric code of dimension $ k $ defined
from a function field of genus $g$. Then the minimum distance
satisfies $ d(C) \geq n - k + 1 - g $.
\end{theorem}

Given a function field ${\mathcal{F}}$, we shall write $N(\mathcal{F})$ for its number of rational places and $ g(\mathcal{F}) $ for its genus. For asymptotic purposes, we will make use of Ihara's constant \cite{ihara}
\begin{equation*}
A(q) = \limsup_{g(\mathcal{F}) \rightarrow \infty} \frac{N(\mathcal{F})}{g(\mathcal{F})},
\end{equation*}
where the limit is taken over all function fields over $ {\mathbb{F}}_q $ of genus $ g(\mathcal{F}) > 0 $. The Drinfeld-Vl\u{a}du\c{t} bound \cite{drvl} states that 
\begin{equation}
A(q) \leq \sqrt{q}-1. 
\label{eqsnabel}
\end{equation}
As is well-known $A(q)$ is always strictly positive and equality in~(\ref{eqsnabel}) holds if $ q $ is a perfect square \cite{ihara}. See \cite{bassabeelenmv} for the status on what is known about $ A(q) $ for $ q $ being a non-square. For convenience, we give the following definition:
\begin{definition} \label{optimal tower}
A tower of function fields $({\mathcal{F}}_i)_{i=1}^\infty$ over
${\mathbb{F}}_q$ is optimal if $N({\mathcal{F}}_i)\rightarrow \infty$
and $N({\mathcal{F}}_i)/g({\mathcal{F}}_i) \rightarrow A(q)$ for $i
\rightarrow \infty$. On the other hand, $(C_i)_{i=1}^\infty$ is an optimal sequence of one-point algebraic geometric codes defined from ${\mathcal{F}}_i$ if $n_i/N({\mathcal{F}}_i)\rightarrow 1$ for $i \rightarrow \infty$, where $ n_i $ is the length of $ C_i $.
\end{definition}

The above together with~(\ref{eqomega1bnd}) and (\ref{eqomega2bnd}) immediately combine into the following result concerning the existence
of 
asymptotically good sequences of ramp secret sharing schemes.

\begin{theorem}\label{theomega}
Let $(C_2(i) \subsetneq C_1(i)\subset
{\mathbb{F}}_q^{n_i})_{i=1}^{\infty}$ be a sequence of nested
algebraic geometric codes defined from an optimal tower of function
fields and satisfying $n_i=N({\mathcal{F}}_i)-1$, $\dim C_1(i)/n_i
\rightarrow R_1$ and $\dim C_2(i)/n_i \rightarrow R_2$ for some $0 <
R_2 \leq  R_1 < 1$. Then the corresponding sequence of linear ramp
secret sharing schemes (see Section~\ref{sec1}) satisfies 
$\Omega^{(1)} \geq R_2-\frac{1}{A(q)}$ and $\Omega^{(2)}
\leq  R_1+\frac{1}{A(q)}$. 
\end{theorem}

In particular we obtain asymptotically good ramp secret sharing
schemes (Definition~\ref{defas}) if $\frac{1}{A(q)} < R_2 \leq R_1 < 1-
\frac{1}{A(q)}$. If moreover $R_2 < R_1$ then also the crucial 
requirement~(\ref{eqLfixed}) is satisfied. Observe that due to the assumption $n_i=N({\mathcal{F}}_i)-1$ we may choose the codes in
Theorem~\ref{theomega} as one-point codes, meaning that without loss of generality we may consider codes of the form $C_2(i)=C_{\mathcal{L}}(D,\mu_2(i)Q)$ and
$C_1(i)=C_{\mathcal{L}}(D,\mu_1(i)Q)$, where $D$ is the sum of $n_i$
distinct rational places in ${\mathcal{F}}_i$ and $Q$ is another rational
place in the same function field.
\section{RGHWs and GHWs of algebraic geometric codes} \label{sec4}

In this section, we give non-asymptotic analysis that are necessary
in Sections \ref{sec5} and \ref{sec7} to treat the parameters
$\Lambda^{(1)}(\varepsilon_1)$ and $\Lambda^{(2)}(\varepsilon_2)$ of the
sequences of algebraic geometric schemes discussed in the previous section. The next theorem combines \cite[Th.\ 
2.65]{handbook}, \cite[Th.\  4.3, Cor.\ 4.2]{tsfasman95} and \cite[Th.\ 1]{wei}. The first part which is a generalization
of Theorem~\ref{thegoppa} is known as the Goppa
bound for GHWs.

\begin{theorem} \label{goppa bound}
Let $ C $ be an algebraic geometric code of dimension $ k $ defined from a function field of genus $g$. Then $ d_m(C) \geq n - k + m - g $, for $ 1 \leq m \leq g $, and $ d_m(C) = n - k + m $, for $ g + 1 \leq m \leq k $.
\end{theorem}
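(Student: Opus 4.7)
The statement combines two classical ingredients, which I would establish in sequence.

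For the first inequality, $d_m(C) \geq n - k + m - g$ when $1 \leq m \leq g$, I would combine the classical Goppa bound $d_1(C) \geq n - \deg(G)$ (valid when $\deg(G) < n$) with the Riemann-Roch inequality $k = \dim \mathcal{L}(G) \geq \deg(G) - g + 1$, equivalently $\deg(G) \leq k + g - 1$. This gives $d_1(C) \geq n - k + 1 - g$. Wei's monotonicity \cite[Theorem 1]{wei}, $d_{m+1}(C) > d_m(C)$, then yields $d_m(C) \geq d_1(C) + m - 1 \geq n - k + m - g$.

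For the equality $d_m(C) = n - k + m$ when $g + 1 \leq m \leq k$, the upper bound $d_m(C) \leq n - k + m$ is the generalized Singleton bound recorded in (\ref{singleton bound}) (with $C_2 = \{\mathbf{0}\}$). For the matching lower bound, I would take an arbitrary $m$-dimensional subspace $D \subset C$, let $T \subset \{1,\ldots,n\}$ be the complement of $\mathrm{Supp}(D)$, and identify $T$ with the divisor $\sum_{i \in T} P_i$. Since $\deg(G) < n$, the evaluation map $\mathcal{L}(G) \to C_\mathcal{L}(D, G)$ is a vector space isomorphism, so pulling $D$ back produces an $m$-dimensional subspace of $\mathcal{L}(G - T)$, whence $\dim \mathcal{L}(G - T) \geq m$. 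The core step is a Riemann-Roch dichotomy: if $\deg(G - T) \geq 2g - 1$, then $\dim \mathcal{L}(G - T) = \deg(G) - |T| + 1 - g$, and combining $m \leq \dim \mathcal{L}(G - T)$ with $\deg(G) \leq k + g - 1$ yields $|T| \leq k - m$; if instead $\deg(G - T) \leq 2g - 2$, then $\deg(K - G + T) \geq 0$ for a canonical divisor $K$, and the Riemann-Roch identity together with the crude estimate $\dim \mathcal{L}(E) \leq \deg(E) + 1$ applied to $E = K - G + T$ telescopes to $\dim \mathcal{L}(G - T) \leq g$, contradicting $m \geq g + 1$. Hence only the first case is possible, so every $m$-dimensional $D$ satisfies $|\mathrm{Supp}(D)| = n - |T| \geq n - k + m$, giving $d_m(C) \geq n - k + m$ and thus equality.

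The main subtlety is tracking the Riemann-Roch error term $\dim \mathcal{L}(K - G + T)$ in the second case of the dichotomy; the crucial observation is that the hypothesis $\deg(G - T) \leq 2g - 2$ automatically forces $\deg(K - G + T) \geq 0$, which is precisely what brings the crude bound $\dim \mathcal{L}(E) \leq \deg(E) + 1$ into play and produces the clean estimate $\dim \mathcal{L}(G - T) \leq g$. For codes of dual type $C_\Omega(D, G)$, the theorem follows from the above via the standard identity $C_\Omega(D, G) = C_\mathcal{L}(D, H)$ for a suitable divisor $H$, so no new argument is required.
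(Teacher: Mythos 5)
Your argument is correct, but there is nothing in the paper to compare it against: Theorem \ref{goppa bound} is stated as a recollection of known results and is not proved in the paper, which simply cites \cite[Theorem 2.65]{handbook} and \cite[Theorems 4.3, Corollary 4.2]{tsfasman95} together with Wei's monotonicity \cite[Theorem 1]{wei}. What you have written is, in effect, a self-contained proof of those cited facts, and it follows the standard lines: Goppa's bound plus Riemann--Roch gives $d_1(C)\geq n-k+1-g$, and strict monotonicity of the weight hierarchy lifts this to $d_m(C)\geq n-k+m-g$; for $m\geq g+1$ the generalized Singleton bound (\ref{singleton bound}) gives the upper bound, and your pull-back of an $m$-dimensional $D\subset C$ to $\mathcal{L}(G-T)$ with the Riemann--Roch dichotomy on $\deg(G-T)$ (the index-of-specialty term $\dim\mathcal{L}(K-G+T)$ being killed by $\dim\mathcal{L}(E)\leq\deg(E)+1$, so that $\dim\mathcal{L}(G-T)\leq g<m$ in the deficient case) correctly reproves \cite[Corollary 4.2]{tsfasman95}. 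The only caveat worth recording is that your proof uses the standing nondegeneracy hypotheses implicitly: $\deg(G)<n$ for the injectivity of the evaluation map (so that $k=\dim\mathcal{L}(G)$), and, for the reduction of the dual case via $C_\mathcal{L}(D,G)^\perp=C_\mathcal{L}(D,H)$ with $H=D-G+(\eta)$, the condition $\deg(G)>2g-2$ so that $\deg(H)<n$ and the primary argument applies to $H$. These conditions hold in every place the paper invokes the theorem (e.g.\ $2g-2<\mu<n$ in Sections \ref{sec4}--\ref{sec5}), so your derivation covers all the uses made of it, but if you want the statement in full generality for abundant codes ($\deg G\geq n$) you would additionally have to account for the kernel $\mathcal{L}(G-D)$ of the evaluation map, which only strengthens the same counting argument.
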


For algebraic geometric codes $C_2 \varsubsetneq C_1$,  the above
theorem exactly gives $d_m(C_1)$ and $M_m(C_1,C_2)$ when $g < m$. In
Proposition~\ref{proposition1} and Proposition~\ref{proposition2}
below, we will improve it in the case $ m \leq g $ for one-point
codes. From now on we will concentrate on one-point algebraic
geometric codes. That is,  codes $ C_{\mathcal{L}}(D,G) $ or $ C_{\mathcal{L}}(D,G)^\perp $, where $ G = \mu Q $, $Q$ is a rational place and $ \mu \geq -1 $. Writing  $ \nu_Q $ for the valuation at $ Q $, the Weierstrass semigroup corresponding to $ Q $ is
$$ H(Q) = -\nu_Q \left( \bigcup_{\mu = 0}^\infty \mathcal{L}(\mu Q) \right) = \{ \mu \in \mathbb{N}_0 \mid \mathcal{L}(\mu Q) \neq \mathcal{L}((\mu -1) Q) \}. $$
As is well-known, the number of missing positive numbers in $ H(Q) $
equals the genus $ g $ of the function field. The conductor $ c $ is
by definition the smallest element in $ H(Q) $ such that all integers
greater than or equal to that number belong to the set. The following lemma is well-known \cite[Th.\ 2.65]{handbook}:
\begin{lemma} \label{lemma 1}
For $ \mu \geq - 1 $, $ k = \dim C_{\mathcal{L}}(D,\mu Q)$ satisfies:
\begin{itemize}
 \item  $ k \geq \mu+1-g$ if $\mu \leq 2g-2$,
 \item $ k = \mu+1-g$ if $2g-2 < \mu <n$, and
\item $ k \leq \mu+1-g$ if $n \leq \mu$.   
\end{itemize}
If $ \mu = n+2g-1 $, then $ C_{\mathcal{L}}(D,\mu Q) = \mathbb{F}_q^n
$.
\end{lemma}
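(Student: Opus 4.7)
The plan is to rely on the short exact sequence
\begin{equation*}
0 \longrightarrow \mathcal{L}(\mu Q - D) \longrightarrow \mathcal{L}(\mu Q) \xrightarrow{\mathrm{ev}} C_{\mathcal{L}}(D, \mu Q) \longrightarrow 0,
\end{equation*}
where $\mathrm{ev}(f) = (f(P_1), \ldots, f(P_n))$. The image is $C_{\mathcal{L}}(D, \mu Q)$ by definition of the code, while the kernel is exactly $\mathcal{L}(\mu Q - D)$, since a function of $\mathcal{L}(\mu Q)$ that vanishes on every $P_i$ has divisor at least $D - \mu Q$. Hence $k = \dim \mathcal{L}(\mu Q) - \dim \mathcal{L}(\mu Q - D)$, and the three bullet points will drop out of Riemann-Roch, which states $\dim \mathcal{L}(E) = \deg(E) + 1 - g + i(E)$ with $i(E) = \dim \mathcal{L}(K - E) \geq 0$ vanishing as soon as $\deg(E) \geq 2g - 1$.

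I would dispatch the three cases as follows. If $\mu \leq 2g - 2$, then in the standard regime $n \geq 2g - 1$ one automatically has $\mu < n$, so $\deg(\mu Q - D) < 0$ forces $\mathcal{L}(\mu Q - D) = \{0\}$, and Riemann-Roch gives $k = \dim \mathcal{L}(\mu Q) \geq \mu + 1 - g$. If $2g - 2 < \mu < n$, the kernel is again trivial while now $\mu \geq 2g - 1$ forces equality in Riemann-Roch, so $k = \mu + 1 - g$. If $\mu \geq n \geq 2g - 1$, then $\dim \mathcal{L}(\mu Q) = \mu + 1 - g$ exactly, whereas $\dim \mathcal{L}(\mu Q - D) \geq 0$, yielding $k \leq \mu + 1 - g$.

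For the remaining assertions, at $\mu = n + 2g - 1$ both $\mu Q$ and $\mu Q - D$ have degree at least $2g - 1$, so Riemann-Roch is sharp and gives $\dim \mathcal{L}(\mu Q) = n + g$ and $\dim \mathcal{L}(\mu Q - D) = g$; hence $k = n$ and $C_{\mathcal{L}}(D, \mu Q) = \mathbb{F}_q^n$. The fact $\# H^*(Q) = n$ then follows by telescoping: $\dim C_{\mathcal{L}}(D, \mu Q)$ starts at $0$ for $\mu = -1$, reaches $n$ at $\mu = n + 2g - 1$, and each increment $\mu \mapsto \mu + 1$ raises the dimension by at most one (as the same holds for $\dim \mathcal{L}(\mu Q)$), so there must be exactly $n$ strict jumps. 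Finally, for $\mu < n$ the kernel of $\mathrm{ev}$ is trivial both at $\mu$ and at $\mu - 1$, so the evaluation map identifies the flag $\mathcal{L}((\mu - 1)Q) \subset \mathcal{L}(\mu Q)$ with the flag $C_{\mathcal{L}}(D, (\mu - 1)Q) \subset C_{\mathcal{L}}(D, \mu Q)$, giving $H^*(Q) \cap [0, n) = H(Q) \cap [0, n)$. The only subtle point I expect is the implicit assumption $n \geq 2g - 1$ needed for the third bullet; I would simply flag this as the standard regime in which AG codes are deployed.
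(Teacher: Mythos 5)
Your proof is correct, and it is exactly the standard argument behind the results that the paper does not prove but only cites (the paper presents this lemma as well known, pointing to \cite[Theorem 2.65]{handbook} and \cite[Theorem 3]{Cramer-Cascudo}); the evaluation sequence with kernel $\mathcal{L}(\mu Q - D)$, the Riemann--Roch case analysis, the telescoping count giving $\# H^*(Q) = n$, and the identification of the two flags for $\mu < n$ are all sound. Concerning the caveat you flag at the end: you are right that a hypothesis is hidden in the first and third bullets, and this is not an artifact of your method --- if $n \leq 2g-2$ and $n \leq \mu \leq 2g-2$ both bullets can genuinely fail (for instance $n=1$, $\mu = 1$, $g \geq 2$: then $\mathcal{L}(Q)$ consists of the constants, so $k = 1 > \mu + 1 - g$, contradicting the third bullet). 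However, rather than assuming $n \geq 2g-1$ globally, it is cleaner to localize the hypotheses: the first bullet needs only $\mu < n$ (so that the kernel is trivial and Riemann's inequality applies to $\mathcal{L}(\mu Q)$ itself), and the third needs only $\mu \geq 2g-1$ (so that $\dim \mathcal{L}(\mu Q) = \mu + 1 - g$ exactly, and subtracting $\dim \mathcal{L}(\mu Q - D) \geq 0$ gives the bound). These weaker conditions hold in every place the paper invokes the lemma: Proposition \ref{proposition1} uses the first bullet with $\mu < n$, Proposition \ref{proposition2} uses the third with $\mu_2 > 2g-2$, and the final assertions concern $\mu = n + 2g - 1 \geq 2g-1$. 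This localization matters because the blanket assumption $n \geq 2g-1$ is not always available for the towers considered (e.g.\ $q = 4$, where $n/g \rightarrow \sqrt{q}-1 = 1$), whereas the per-bullet conditions cost nothing.
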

From~\cite[Th.\ 19, 20]{rghw2014} we have the following result.
\begin{theorem} \label{theorem 1} Let
$ C_1 = C_{\mathcal{L}}(D,\mu_1 Q) $ and $ C_2 = C_{\mathcal{L}}(D,\mu_2 Q) $, with $ -1 \leq \mu_2 < \mu_1 $. Write $ k_1 = \dim C_1 $, $ k_2 = \dim C_2 $ and $ \ell = k_1 - k_2 $. If $ 1 \leq m \leq \ell $,  then 
\begin{enumerate}
\item
$ M_m(C_1,C_2) \geq n - \mu_1 + \min \{ \# \{ \alpha \in
  \cup_{s=1}^{m-1}(i_s + H(Q)) \mid \alpha \notin H(Q) \} \mid -(\mu_1-\mu_2) +1 \leq i_1 < \ldots < i_{m-1} \leq -1 \} $.
\item
$ M_m(C_2^\perp,C_1^\perp) \geq \min \{ \# \{ \alpha \in \cup_{s=1}^{m}(i_s + (\mu_1 - H(Q))) \mid \alpha \in H(Q) \} \mid -(\mu_1-\mu_2)+1 \leq i_1 < \ldots < i_{m} \leq 0 \} $.
\end{enumerate}
\end{theorem}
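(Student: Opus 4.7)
The statement is \cite[Theorems 19, 20]{rghw2014}, which I would reprove along the following lines. Both bounds rest on a staircase-basis argument for $\Lambda \subseteq \mathcal{L}(\mu_1 Q)$ with respect to the pole-order valuation $-\nu_Q$ at $Q$. I describe the plan for part 1; part 2 is handled by dualising and reparametrising.

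Assume $\mu_1 < n$, so the evaluation map $\mathrm{ev}: \mathcal{L}(\mu_1 Q) \to C_1$ is an isomorphism. An $m$-dimensional subspace $D \subseteq C_1$ with $D \cap C_2 = \{\mathbf{0}\}$ then corresponds to a unique $\Lambda \subseteq \mathcal{L}(\mu_1 Q)$ of dimension $m$ with $\Lambda \cap \mathcal{L}(\mu_2 Q) = \{0\}$. Gaussian elimination with respect to $-\nu_Q$ produces a basis $f_1, \dots, f_m$ of $\Lambda$ whose leading pole orders satisfy $\mu_2 < \lambda_1 < \cdots < \lambda_m \leq \mu_1$ and all lie in $H(Q)$; one may normalise $\lambda_m = \mu_1$ (otherwise the argument produces a strictly sharper bound). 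Setting $i_s := \lambda_s - \mu_1$ for $s = 1, \dots, m-1$ yields $-(\mu_1 - \mu_2) + 1 \leq i_1 < \cdots < i_{m-1} \leq -1$, the range appearing in the theorem.

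Let $J = \{j : f(P_j) = 0 \text{ for every } f \in \Lambda\}$ and $F = \sum_{j \in J} P_j$, so that $|\mathrm{Supp}(D)| = n - \deg F$ and it suffices to upper-bound $\deg F$. Since $f_s \in \mathcal{L}(\lambda_s Q - F) \setminus \mathcal{L}((\lambda_s - 1) Q - F)$ for every $s$, multiplication by any $g \in \mathcal{L}(hQ)$ with $h \in H(Q)$ shows that each element of $\bigcup_{s}(\lambda_s + H(Q))$ is a non-gap of the chain $\{\dim \mathcal{L}(\alpha Q - F)\}_{\alpha}$. Applying Riemann-Roch to $\mathcal{L}(\mu_1 Q - F)$ and counting those non-gaps in $[0, \mu_1]$ that are actually gaps of $H(Q)$ gives, after the substitution $\alpha \leftrightarrow \mu_1 + \alpha$, the inequality
$\deg F \leq \mu_1 - \#\{\alpha \in \bigcup_{s=1}^{m-1}(i_s + H(Q)) : \alpha \notin H(Q)\}.$
Taking the minimum over the admissible tuples $(i_1, \dots, i_{m-1})$ completes part 1.

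Part 2 proceeds via the residue description $C_{\mathcal{L}}(D, \mu_1 Q)^\perp = C_\Omega(D, \mu_1 Q)$, which can be reinterpreted as a $C_{\mathcal{L}}$-type code whose relevant ``dual pole orders'' live in the complementary set $\mu_1 - H(Q)$. An analogous staircase argument, combined with the reversal $\alpha \mapsto \mu_1 - \alpha$, exchanges the roles of $H(Q)$ and its complement and produces the condition ``$\alpha \in H(Q)$'' in place of ``$\alpha \notin H(Q)$''. The main technical obstacle in both parts is the combinatorial Riemann-Roch bookkeeping that relates the non-gaps of $\mathcal{L}(\bullet Q - F)$ to shifts of $H(Q)$; once this is set up, the passage from the specific $(\lambda_s)$ of a given $\Lambda$ to the minimum over all admissible $(i_s)$ is a clean reparametrisation.
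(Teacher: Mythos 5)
The paper does not prove this statement at all: it is quoted verbatim from \cite[Theorems 19, 20]{rghw2014}, so your proposal has to be judged as a reconstruction of that proof. Your skeleton for part 1 (triangularise $\Lambda\subset\mathcal{L}(\mu_1Q)$ by pole order at $Q$, pass to the common-zero divisor $F$, and use multiplication by functions in $\mathcal{L}(hQ)$, $h\in H(Q)$, to show that every element of $\bigcup_s(\lambda_s+H(Q))$ is a jump of the filtration $\beta\mapsto\dim\mathcal{L}(\beta Q-F)$) is indeed the right one. The gap is in the counting step. Applying Riemann--Roch to $\mathcal{L}(\mu_1Q-F)$ and counting jumps only in $[0,\mu_1]$ can give at best $\deg F\leq \mu_1-\#\{\alpha<0 \mid \alpha\in\bigcup_{s=1}^{m-1}(i_s+H(Q))\}$, because the elements of the theorem's set with $\alpha>0$, $\alpha\notin H(Q)$ (the Weierstrass gaps hit by the shifted semigroups) correspond to jump positions $\lambda_m+\alpha$ lying \emph{above} $\mu_1$, outside the window you inspect; your substitution ``$\alpha\leftrightarrow\mu_1+\alpha$'' also silently replaces the condition $\alpha\notin H(Q)$ by $\mu_1+\alpha\notin H(Q)$, which is a different condition. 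These positive-gap terms are not a cosmetic refinement: they are exactly what makes the bound beat the Goppa bound (they produce the $m-c+g+h_{c-m}$ contribution in Proposition~\ref{proposition1}). The repair is to count over all of $[0,\infty)$: if $H_F$ denotes the jump set of $\beta\mapsto\dim\mathcal{L}(\beta Q-F)$, then Riemann--Roch at $\beta\geq\deg F+2g-1$ gives $\#\bigl([0,\infty)\setminus H_F\bigr)=\deg F+g$, and since $H_F\supseteq\bigcup_{s=1}^m(\lambda_s+H(Q))$, comparing complements of the latter set (split at $\lambda_m$) yields $\deg F\leq \lambda_m-\#\{\alpha\in\bigcup_{s=1}^{m-1}(\lambda_s-\lambda_m+H(Q))\mid\alpha\notin H(Q)\}$, which is the stated bound after taking the minimum over admissible tuples.

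Part 2 is a second genuine gap. It is not obtained from part 1 by ``dualising and reparametrising'': $C_{\mathcal{L}}(D,\mu_1Q)^\perp=C_\Omega(D,\mu_1Q)$ is an evaluation code for a divisor of the form $K+D-\mu_1Q$, which is not one-point, so the Weierstrass-semigroup bookkeeping at $Q$ does not transfer by relabelling pole orders as ``$\mu_1-H(Q)$''; the cited source proves the dual bound by a separate (Feng--Rao-type) argument, and the sign flip ``$\alpha\in H(Q)$'' versus ``$\alpha\notin H(Q)$'' has to come out of that argument, not be asserted. Moreover, your standing assumption $\mu_1<n$ (needed for the evaluation map to be an isomorphism) is incompatible with how part 2 is used in this paper: Proposition~\ref{proposition2} applies it with $\mu_1=n+2g-1$, so a complete proof of part 2 must work for $\mu_1\geq n$, e.g.\ by phrasing the argument in terms of $H^*(Q)$ as in \cite{rghw2014}.
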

Choosing $C_2 = \{ \mathbf{0} \}$ in item 1, we obtain a bound on the GHWs of $C_1$. Similarly, choosing $C_1 = \mathbb{F}_q^n$ in item 2, we get a bound on the GHWs of $C_2^\perp$. 
\begin{proposition} \label{proposition1}
For $ 0 \leq \gamma \leq c $, let $ h_\gamma = \# \left( H(Q) \cap (0,\gamma] \right) $ and let $ \mu \geq -1 $ and $k=\dim C_{\mathcal{L}}(D,\mu Q)$. If $ \mu < n $ and $ 1 \leq m \leq \min \{ k,g\} $, then 
  $$ d_m(C_{\mathcal{L}}(D,\mu Q)) \geq n-k + 2m - c + h_{c-m} \geq n-k+2m-c.$$
\end{proposition}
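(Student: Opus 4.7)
My plan is to apply Theorem~\ref{theorem 1}, item~1, with $C_2 = \{\mathbf{0}\}$ (so $\mu_2 = -1$ and $\mu_1 = \mu$), which yields
\[ d_m(C_{\mathcal{L}}(D,\mu Q)) \geq n - \mu + T_m, \]
where $T_m$ is the minimum, over all tuples $-\mu \leq i_1 < \cdots < i_{m-1} \leq -1$, of $\#\bigl(\bigcup_{s=1}^{m-1}(i_s + H(Q)) \setminus H(Q)\bigr)$. The goal is to establish the lower bound $T_m \geq 2m - 1 + g - c + h_{c-m}$; once this is in place, Lemma~\ref{lemma 1} gives $k \geq \mu + 1 - g$, hence $n - \mu \geq n - k + 1 - g$, which combined with the $T_m$-bound produces the first claimed inequality. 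The second inequality $\geq n - k + 2m - c$ is immediate from $h_{c-m} \geq 0$.

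The case $m = 1$ is essentially Goppa's bound: the index set is empty, $T_1 = 0$, and since $h_{c-1} = c - 1 - g$ the claim reduces to $d_1 \geq n - k + 1 - g$, which follows directly from Lemma~\ref{lemma 1}.

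For $m \geq 2$ and any admissible tuple, I will exhibit two disjoint families of elements in $\bigcup_s (i_s + H(Q)) \setminus H(Q)$. First, since $0 \in H(Q)$, each $i_s$ lies in $i_s + H(Q)$; being strictly negative, each is outside $H(Q)$, giving $m - 1$ distinct elements below zero. Second, every gap $\alpha$ with $c - m < \alpha \leq c - 1$ is forced into the union by the index $i_1$: the strict ordering $i_1 < \cdots < i_{m-1} \leq -1$ gives $-i_1 \geq m - 1$, so $\alpha + (-i_1) \geq (c - m + 1) + (m - 1) = c$, and by the definition of the conductor this sum lies in $H(Q)$, whence $\alpha \in i_1 + H(Q)$. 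Since exactly $(c - m) - h_{c-m}$ gaps lie in $(0, c-m]$, there are $g - (c-m) + h_{c-m}$ gaps in $(c-m, c-1]$, all in the union. Adding the two disjoint contributions gives $T_m \geq (m - 1) + (g - c + m + h_{c-m}) = 2m - 1 + g - c + h_{c-m}$, as needed. The only point requiring care is the elementary bound $-i_1 \geq m - 1$ (which makes the high gaps unavoidable via $i_1$ alone); everything else is bookkeeping with Theorem~\ref{theorem 1}, Lemma~\ref{lemma 1}, and the definition of the conductor.
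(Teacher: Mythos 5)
Your proof is correct and follows essentially the same route as the paper: apply item 1 of Theorem~\ref{theorem 1} with $\mu_2=-1$, count the $m-1$ negative indices together with the gaps in $(c-m,c-1]$ forced into $i_1+H(Q)$ by $-i_1\geq m-1$ and the conductor, and finish with $k\geq \mu+1-g$ from Lemma~\ref{lemma 1}. The separate treatment of $m=1$ and the explicit use of $i_1$ instead of the paper's interval inclusion are only cosmetic differences.
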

\begin{proof}
We will apply item 1 in Theorem \ref{theorem 1} for $ \mu_1 = \mu $ and $ \mu_2 = - 1 $. Consider numbers $ -\mu \leq i_1 <  \cdots < i_{m-1} \leq -1 $. We have  $ [c-m+1,c] \setminus H(Q) \subset [\max \{0,c+i_1 \},c] \setminus H(Q) \subset \{ \alpha \in \cup_{s=1}^{m-1}(i_s + H(Q)) \mid \alpha \notin H(Q) \} \cap [0,\infty) $, where the first inclusion comes from $ i_1 \leq -m+1 $. Now the number of elements in $ [c-m+1,c] \cap H(Q) $ is at most $ (c-g) - h_{c-m} $, and we have that
$ \# \left( \{ \alpha \in \cup_{s=1}^{m-1}(i_s + H(Q)) \mid \alpha \notin H(Q) \} \cap [0,\infty) \right) \geq m - (c-g) + h_{c-m}$.
On the other hand, we have that $ \{ i_1,  \ldots, i_{m-1} \}
\subset \{ \alpha \in \cup_{s=1}^{m-1}(i_s + H(Q)) \mid \alpha \notin
H(Q) \} \cap (-\infty,0) $. Thus, from Theorem~\ref{theorem 1}, we obtain  
$ d_m(C_{\mathcal{L}}(D,\mu Q)) \geq (n-\mu) + (m - 1) + (m - c + g + h_{c-m})$.
Since $ k \geq \mu - g + 1 $ by Lemma~\ref{lemma 1}, the result follows.
\end{proof}
\begin{proposition} \label{proposition2}
For $ \gamma \geq 1 $, let $ h^\prime_\gamma = \# ([\gamma,\infty)
\setminus H(Q)) $ and let $ \mu > 2g-2 $ and $k=\dim C_{\mathcal{L}}(D,\mu Q)^\perp $.
If $ 1 \leq m \leq \min \{ k,g \} $, then 
$$ d_m(C_{\mathcal{L}}(D,\mu Q)^\perp) \geq n-k + 2m -c + h^\prime_{\mu-c+m}\geq n-k+2m-c. $$
\end{proposition}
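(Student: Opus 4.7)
The plan is to dualize the proof of Proposition~\ref{proposition1}, replacing item~1 of Theorem~\ref{theorem 1} by item~2. I will apply the theorem with $\mu_1 = n + 2g - 1$ and $\mu_2 = \mu$; Lemma~\ref{lemma 1} gives $C_{\mathcal{L}}(D,\mu_1 Q) = \mathbb{F}_q^n$, so $C_1^\perp = \{\mathbf{0}\}$ and hence $M_m(C_2^\perp, C_1^\perp) = d_m(C_{\mathcal{L}}(D, \mu Q)^\perp)$. For any admissible sequence $\mu - n - 2g + 2 \leq i_1 < \cdots < i_m \leq 0$, set $\beta_s = \mu_1 + i_s$; the $\beta_s$ are $m$ distinct integers in $[\mu + 1, n + 2g - 1]$, and since $\beta_s \geq \mu + 1 > 2g - 1 \geq c - 1$, each $\beta_s$ lies in $H(Q)$, with $\beta_m \geq \mu + m$.

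The main task is to lower-bound $|A|$, where $A = \{\alpha \in H(Q) : \exists s,\ \beta_s - \alpha \in H(Q)\}$. I will exhibit two disjoint subfamilies inside $A$: \emph{(a)} the $m$ values $\beta_1,\ldots,\beta_m$, which lie in $A$ via $h = 0$; and \emph{(b)} the set $H(Q) \cap [0, \beta_m - c]$, since for any such $\alpha$ one has $\beta_m - \alpha \geq c$, forcing $\beta_m - \alpha \in H(Q)$. The hypothesis $m \leq g < c$ guarantees $\beta_s > \beta_m - c$ for every $s$, so (a) and (b) are disjoint.

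To count (b), I will use that the $g$ gaps of $H(Q)$ lie in $[1, c - 1]$, giving the identity $|H(Q) \cap [0, N]| = N + 1 - g + h'_{N+1}$ for $N \geq 0$. Evaluating at $N = \beta_m - c$ and combining with the minimal case $\beta_m = \mu + m$ (the worst case, attained when the $i_s$ are chosen consecutively at the bottom of their range), I expect to obtain $|A| \geq m + (\mu + m - c + 1 - g) + h'_{\mu - c + m}$. Together with $n - k = \mu + 1 - g$ from Lemma~\ref{lemma 1} (using $\mu > 2g - 2$, so $\dim C_{\mathcal{L}}(D,\mu Q) = \mu + 1 - g$), this rearranges to the stated $n - k + 2m - c + h'_{\mu - c + m}$; the trivial $h'_{\mu - c + m} \geq 0$ yields the second inequality.

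The delicate step will be matching the subscript of $h'$ exactly to $\mu - c + m$ rather than to $\mu - c + m + 1$: the count of non-gaps in $[0, \beta_m - c]$ naturally produces $h'_{\beta_m - c + 1}$, so one must argue carefully at the boundary $\beta_m = \mu + m$ (for example by shifting to the sub-interval $[0, \beta_m - c + 1]$ or by recruiting an additional element from a different $\beta_s$ when $\mu - c + m$ itself is a gap) to land exactly on $h'_{\mu - c + m}$. This boundary book-keeping is the main obstacle and mirrors precisely the analogous issue in Proposition~\ref{proposition1}, where the sub-interval $[c - m + 1, c]$ inside $[c + i_1, c]$ is chosen so as to produce the $h_{c-m}$ correction.
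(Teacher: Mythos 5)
Your strategy is the same as the paper's proof: apply item 2 of Theorem~\ref{theorem 1} with $\mu_1=n+2g-1$ and $\mu_2=\mu$, use Lemma~\ref{lemma 1} to get $C_1=\mathbb{F}_q^n$, and lower bound the relevant set by the $m$ elements $\beta_s=\mu_1+i_s$ together with an initial segment of $H(Q)$. But two steps do not hold as written.

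First, your disjointness claim fails for general admissible tuples: the hypothesis $m\le g<c$ does not give $\beta_s>\beta_m-c$ unless the $i_s$ are (nearly) consecutive, and the minimum in Theorem~\ref{theorem 1} runs over all tuples, e.g.\ $i_1$ close to $-(\mu_1-\mu_2)+1$ and $i_m=0$ makes $\beta_m-\beta_1$ far larger than $c$, so some $\beta_s$ land inside your set (b) $=H(Q)\cap[0,\beta_m-c]$; you cannot simply restrict attention to the ``worst case'' $\beta_m=\mu+m$ without an argument. The paper avoids this by using the fixed interval $[0,\mu-c+m]\subset[0,\beta_m-c]$ (valid for every tuple because $i_m\ge-(\mu_1-\mu_2)+m$), which is disjoint from $\{\beta_1,\dots,\beta_m\}\subset(\mu,\mu_1]$ since $m\le c$. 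Your version is repairable (any $\beta_s\le\beta_m-c$ lies in $H(Q)\cap(\mu+m-c,\beta_m-c]$ and so still contributes to the count), but the repair, or the fixed-interval choice, must be made explicit.

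Second, the step you defer is not mere book-keeping, and it cannot be carried out by this decomposition. Your counting identity $\#(H(Q)\cap[0,N])=N+1-g+h^\prime_{N+1}$ is the correct one, and it yields the bound with subscript $\mu-c+m+1$; the paper's proof at the corresponding point simply asserts that the number of gaps in $[0,\mu-c+m]$ is at most $g-h^\prime_{\mu-c+m}$, whereas the exact count is $g-h^\prime_{\mu-c+m+1}$, and the two differ by one precisely when $\mu-c+m$ is a gap. In that case your two families genuinely miss an element: for $H(Q)=\langle 2,7\rangle$ (genus $3$, $c=6$), $\mu=6$, $m=1$, the minimizing tuple gives the set $\{0,\mu+1\}$ of size $2$, while the stated bound would be $3$; and one can in fact have $d_1(C_{\mathcal{L}}(D,6Q)^\perp)=2$ by including among the $P_i$ a pair of points interchanged by the hyperelliptic involution. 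So what your argument proves, once the disjointness issue is fixed, is the inequality with $h^\prime_{\mu-c+m+1}$ (which still gives the final bound $\ge n-k+2m-c$); the upgrade to $h^\prime_{\mu-c+m}$ that you leave as ``the main obstacle'' is a genuine gap, and it is exactly the point at which the paper's own proof is imprecise.
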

\begin{proof}
We will apply item 2 in Theorem \ref{theorem 1} for $ \mu_1 = n + 2g -
1 $ and $ \mu_2 = \mu $ to prove that $M_m(C_2^\perp,C_1^\perp) \geq
k_2+2m-c+h^\prime_{\mu_2-c+m}$, where $ k_2 = \dim C_2 $. 
Consider numbers $ -(\mu_1-\mu_2)+1 \leq i_1 <  \cdots < i_{m} \leq 0 $. First, $ (i_m+\mu_1 - H(Q)) \cap [0,\mu_2] $ contains the set $ [0,\mu_1-c-(\mu_1-\mu_2)+m] = [0,\mu_2 -c+m] $, since $ i_m \geq -(\mu_1-\mu_2)+m $ and $ \mu_1-c-(\mu_1-\mu_2)+m \leq \mu_2 $. Here, we used the assumption $m \leq g$ and the fact that $g \leq c$. Thus,
$ \# \left( (i_m+\mu_1 - H(Q)) \cap H(Q) \cap [0,\mu_2] \right)$ is
greater than or equal to $(\mu_2 -c + m + 1) - (g - h^\prime_{\mu_2 - c + m})$.
On the other hand, $ \{ \mu_1 + i_1,  \ldots, \mu_1 +
i_{m} \}$ is contained in $\{ \alpha \in \cup_{s=1}^{m}(i_s + (\mu_1 -
H(Q))) \mid \alpha \in H(Q) \}$, which are $ m $ elements in the range
$ (\mu_2,\mu_1] $. Thus, from the previous theorem we obtain $ M_m(C_2^\perp,C_1^\perp) \geq (\mu_2-c+m+1 - g + h_{\mu_2-c+m}) + m$.
Since $ k_2 \leq \mu_2-g+1 $ and $C_1={\mathbb{F}}_q^n$ by Lemma~\ref{lemma 1}, the result follows.
\end{proof}
%
\section{Asymptotic analysis for algebraic geometric codes} \label{sec5}
As a preparation step to treat the parameters
$\Lambda^{(1)}(\varepsilon_1)$ and $\Lambda^{(2)}(\varepsilon_2)$ of
sequences of schemes based on algebraic geometric codes, in this
section we derive asymptotic consequences of the non-asymptotic results
derived in the previous section. We start our investigations by
commenting on \cite[Th.\ 5.9]{tsfasman95}, which if true would imply
that the codes in Theorem~\ref{theomega} would attain the Singleton bound~(\ref{eqdifflam}) in all cases 
$\frac{1}{q} < R_2 < R_1 < 1-\frac{1}{q}$ and for all $0 \leq
\varepsilon_1 , 
\varepsilon_2 \leq 1$. Below we reformulate~\cite[Th.\ 5.9]{tsfasman95} with the needed modification which ensures that the Singleton bound is reached when $1/A(q) < \rho$, in contrast to $0 \leq \rho$, as it appears in~\cite{tsfasman95}. We also adapt the formulation to better fit our purposes of constructing asymptotically good sequences of secret sharing schemes. We include the proof from~\cite{tsfasman95} to explain why this modification is needed. 
\begin{theorem} \label{theorem tsfasman}
  Let $({\mathcal{F}}_i)_{i=1}^\infty$ be an optimal tower of function fields over ${\mathbb{F}}_q$. Consider $R, \rho$ with $ 0 \leq \rho \leq R \leq 1$.
  Let $(C_i)_{i=1}^\infty$ be an optimal sequence of one-point algebraic geometric codes defined from $({\mathcal{F}}_i)_{i=1}^\infty$
  such that $\dim C_i/n_i \rightarrow R$. For all sequences of positive integers $(m_i)_{i=1}^\infty$ with $m_i/n_i \rightarrow \rho$, it holds that
$ \delta = \liminf_{i \rightarrow \infty} d_{m_i}(C_i)/n_i \geq 1-R+\rho-\frac{1}{A(q)} $ and, if $ 1/A(q) < \rho$, then $\delta=1-R+\rho$.
\end{theorem}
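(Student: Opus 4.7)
The overall strategy is to combine the Goppa-type bound on generalized Hamming weights of algebraic geometric codes (Theorem \ref{goppa bound}) with the asymptotic behaviour of the genus forced by the definition of an optimal tower. No ingredients beyond these two are needed.

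For the first inequality, I would use that Theorem \ref{goppa bound} yields $d_m(C) \geq n - k + m - g$ uniformly for $1 \leq m \leq k$ (the case $g < m \leq k$ even giving the Singleton equality, which is a fortiori stronger). Applying this to each $C_i$ and dividing by $n_i$ gives
$$ \frac{d_{m_i}(C_i)}{n_i} \geq 1 - \frac{k_i}{n_i} + \frac{m_i}{n_i} - \frac{g(\mathcal{F}_i)}{n_i}. $$
The constant term stays $1$ while the next two ratios converge to $R$ and $\rho$. For the last, since $A(q)$ is finite by the Drinfeld-Vl\u{a}du\c{t} bound (\ref{eqsnabel}), the condition $N(\mathcal{F}_i)/g(\mathcal{F}_i) \to A(q)$ forces $g(\mathcal{F}_i) \to \infty$; combined with $n_i/N(\mathcal{F}_i) \to 1$, the factorization $g(\mathcal{F}_i)/n_i = (g(\mathcal{F}_i)/N(\mathcal{F}_i)) \cdot (N(\mathcal{F}_i)/n_i)$ shows $g(\mathcal{F}_i)/n_i \to 1/A(q)$. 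Taking liminf then gives $\delta \geq 1 - R + \rho - 1/A(q)$.

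For the equality statement under $1/A(q) < \rho$, my plan is to observe that $m_i/n_i \to \rho$ together with $g(\mathcal{F}_i)/n_i \to 1/A(q) < \rho$ forces $m_i > g(\mathcal{F}_i)$ for all sufficiently large $i$. Hence for such $i$ the second clause of Theorem \ref{goppa bound} applies and yields the exact equality $d_{m_i}(C_i) = n_i - k_i + m_i$. Dividing by $n_i$ and passing to the limit gives $\lim_i d_{m_i}(C_i)/n_i = 1 - R + \rho$, so $\delta = 1 - R + \rho$.

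There is no serious obstacle in the argument; rather, the point that deserves attention is \emph{why} the hypothesis must be $1/A(q) < \rho$ instead of merely $0 \leq \rho$ as stated in the original \cite[Theorem 5.9]{tsfasman95}. If $\rho \leq 1/A(q)$ one cannot guarantee $m_i > g(\mathcal{F}_i)$ eventually, so only the Goppa-type lower bound (weaker than the Singleton bound by a gap of at least $1/A(q) - \rho$) is available and equality cannot be concluded in general. Isolating this threshold is precisely the modification needed to repair the original statement, and explains why the proof strategy above requires the strict inequality.
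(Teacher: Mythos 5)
Your proof is correct and follows essentially the same route as the paper: the first inequality is just the Goppa-type bound from Theorem \ref{goppa bound} divided by $n_i$, using $g(\mathcal{F}_i)/n_i \to 1/A(q)$ for an optimal tower and optimal code sequence, and the equality for $1/A(q) < \rho$ comes from $m_i > g(\mathcal{F}_i)$ eventually, which triggers the exact Singleton-equality clause of Theorem \ref{goppa bound}. Your added detail on why $g(\mathcal{F}_i)/n_i \to 1/A(q)$ and on the role of the strict inequality $1/A(q) < \rho$ (the needed correction to the original statement in \cite{tsfasman95}) matches the paper's intent.
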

\begin{proof}
The first bound on $\delta$ is an easy consequence of the Goppa bound (the first part of Theorem~\ref{goppa bound}). Now assume $1/A(q) < \rho$. 
By assumption, for $i$ large enough we have $ m_i >
g({\mathcal{F}}_i)$, which by the last part of Theorem \ref{goppa
  bound} implies that $ d_{m_i}(C_i) = n_i - \dim C_i + m_i $. Dividing by $ n_i $ and taking the limit, we obtain the result.
\end{proof}

The theorem states that the Singleton bound (\ref{singleton bound})
can be asymptotically reached when $ 1/A(q) < \rho $, which implies $
1/(\sqrt{q}-1) < \rho $ by (\ref{eqsnabel}). However, this leaves the
cases $ 1/A(q) \geq \rho $ undecided. In the following, we shall
concentrate on finding asymptotic results for the cases $ 1/A(q) \geq
\rho $. We will need \cite[Cor.\ 3.6]{tsfasman95} and Wei's duality theorem \cite[Th.\ 3]{wei}, which we now recall in this order:
\begin{lemma}\label{cor2}
For every linear code $ C \subset \mathbb{F}_q^n $ 
we have that
$$ d_m(C) \geq d_1(C) \frac{q^m-1}{q^m-q^{m-1}}, \quad   m= 1, \ldots , \dim
C.$$
\end{lemma}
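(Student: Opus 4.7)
The plan is to establish the inequality by a standard double-counting argument applied to a subspace $D\subset C$ that achieves the $m$-th generalized Hamming weight. Let $D$ be an $m$-dimensional ${\mathbb F}_q$-linear subspace of $C$ with $\#{\rm Supp}(D)=d_m(C)$, and write $S={\rm Supp}(D)$. I will count the set
$$T=\{(i,d)\in S\times D\mid d_i\neq 0\}$$
in two different ways, and compare the outcomes.

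First I would count $T$ column-by-column. For each $i\in S$, the map $D\to{\mathbb F}_q$, $d\mapsto d_i$, is ${\mathbb F}_q$-linear and, by the definition of $S={\rm Supp}(D)$, not identically zero on $D$. Hence it is surjective with kernel of dimension $m-1$, which means the fibre over any nonzero element has size $q^{m-1}$. Therefore the number of $d\in D$ with $d_i\neq 0$ equals $q^m-q^{m-1}$, and summing over $i\in S$ gives
$$\#T=|S|\,(q^m-q^{m-1})=d_m(C)\,(q^m-q^{m-1}).$$

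Next I would count $T$ row-by-row. For each $d\in D$, the number of $i\in S$ with $d_i\neq 0$ is exactly the Hamming weight ${\rm wt}(d)$ (the support of $d$ is automatically contained in $S$). Since $D\subset C$, every nonzero $d\in D$ satisfies ${\rm wt}(d)\geq d_1(C)$, and the zero vector contributes $0$. Thus
$$\#T=\sum_{d\in D\setminus\{\mathbf{0}\}}{\rm wt}(d)\geq (q^m-1)\,d_1(C).$$

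Equating the two expressions gives $d_m(C)(q^m-q^{m-1})\geq d_1(C)(q^m-1)$, which after dividing by $q^m-q^{m-1}>0$ yields the claimed inequality. No step poses a real obstacle; the only point requiring a moment of care is the surjectivity of the coordinate projection on $S$, which is immediate from the definition of support. The argument also shows that equality forces every nonzero codeword in $D$ to have weight exactly $d_1(C)$, which in particular recovers the usual observation that the bound is tight for constant-weight subspaces.
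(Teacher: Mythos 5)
Your proof is correct. The double-counting of the pairs $(i,\mathbf{d})$ with $i\in {\rm Supp}(D)$ and $d_i\neq 0$ is sound: each coordinate functional is a nonzero linear form on the $m$-dimensional space $D$, so it is nonzero on exactly $q^m-q^{m-1}$ vectors, while every nonzero $\mathbf{d}\in D$ is a nonzero codeword of $C$ and hence has weight at least $d_1(C)$; comparing the two counts gives exactly the claimed inequality. The paper itself does not prove this lemma but simply quotes it as \cite[Corollary 3.6]{tsfasman95}, so your argument supplies a short, self-contained, elementary proof of the cited result (it is the standard averaging argument behind that corollary), which is a perfectly adequate substitute.
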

\begin{lemma}\label{proweiduality}\label{corfirststep}
Let $ C \subset \mathbb{F}_q^n $ be a linear code, $\dim C=k $. Write $ d_r = d_r(C)$, $ d_s^\perp = d_s(C^\perp) $ for $ 1 \leq r \leq k $,  $ 1 \leq s \leq n-k $. Then,
$$ \{1,  \ldots , n\} = \{ d_1,  \ldots , d_k\} \cup \{ n+1-d_{n-k}^\perp, \ldots , n+1-d_1^\perp \}. $$
\end{lemma}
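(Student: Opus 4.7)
My plan is to derive Wei's duality theorem from the duality of the length/dimension profiles of $C$ and $C^\perp$. Both sides of the claimed equality lie in $\{1,\ldots,n\}$. By the strict monotonicity of generalized Hamming weights (Wei's original Theorem 1, already invoked in Section \ref{sec4}), the sets $\{d_1(C),\ldots,d_k(C)\}$ and $\{n+1-d_s(C^\perp) : 1 \leq s \leq n-k\}$ have cardinalities $k$ and $n-k$ respectively; their union therefore has $n$ elements precisely when they are disjoint, in which case the set equality with $\{1,\ldots,n\}$ follows. Thus it suffices to prove that $d$ is a generalized Hamming weight of $C$ if and only if $n+1-d$ is \emph{not} a generalized Hamming weight of $C^\perp$.

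I would work with the single-code profile $K_d(C) := K_d(C,\{\mathbf{0}\}) = \max\{\dim(C \cap V_I) : |I|=d\}$, which is the specialization to $C_2 = \{\mathbf{0}\}$ of the RDLP introduced in Section \ref{sec3}. By (\ref{eqnyogvigtig}), $d_r(C) = \min\{d : K_d(C) \geq r\}$, so the generalized Hamming weights of $C$ are exactly the indices $d \in \{1,\ldots,n\}$ at which $K_d(C)$ strictly exceeds $K_{d-1}(C)$. A short argument — enlarging $I$ by one coordinate increases $\dim(C \cap V_I)$ by at most one — shows that the profile is non-decreasing and that each increment $K_d(C) - K_{d-1}(C)$ lies in $\{0,1\}$.

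The crux of the proof is the profile duality identity
$$K_d(C) = k - n + d + K_{n-d}(C^\perp), \qquad 0 \leq d \leq n.$$
To establish it I would fix $J \subseteq \{1,\ldots,n\}$ and identify the inner dual of the image of $C$ under the coordinate projection $\mathbb{F}_q^n \twoheadrightarrow V_J$: viewing $V_J \cong \mathbb{F}_q^{|J|}$ with its own bilinear form, the orthogonal complement inside $V_J$ of that image is exactly $C^\perp \cap V_J$. Hence the image has dimension $|J| - \dim(C^\perp \cap V_J)$, and combining with the rank-nullity relation for the projection (whose kernel is $C \cap V_{J^c}$) yields $\dim(C \cap V_{J^c}) - \dim(C^\perp \cap V_J) = k - |J|$. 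Taking the maximum over $|J|=d$ (equivalently $|J^c|=n-d$) gives the identity.

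Combining the two ingredients, the increment identity
$$\bigl( K_d(C) - K_{d-1}(C) \bigr) + \bigl( K_{n-d+1}(C^\perp) - K_{n-d}(C^\perp) \bigr) = 1$$
holds for every $1 \leq d \leq n$, so exactly one of the two nonnegative increments equals $1$. This is precisely the complementarity statement above, and the partition of $\{1,\ldots,n\}$ follows. The main obstacle is proving the profile duality identity rigorously (in particular, correctly identifying the inner dual of the punctured code with the shortening of the dual code); once that is in place the rest is routine bookkeeping.
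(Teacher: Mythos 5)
Your argument is correct, but it is worth noting that the paper does not prove this lemma at all: it is simply recalled as Wei's duality theorem, \cite[Theorem 3]{wei}, and used as a black box. What you have written is a self-contained derivation along the lines of the dimension/length-profile (Forney-type) duality rather than Wei's original argument. Your chain of reasoning checks out: the single-code profile $K_d(C)=K_d(C,\{\mathbf{0}\})$ is non-decreasing with increments in $\{0,1\}$, so by the specialization of (\ref{eqnyogvigtig}) the GHWs of $C$ are exactly the jump indices of the profile; the identity $\dim(C\cap V_{J^c})-\dim(C^\perp\cap V_J)=k-|J|$ follows from rank–nullity for the projection onto the coordinates in $J$ together with the standard fact that the dual of the punctured code is the shortened dual (the one step you rightly single out as needing care); maximizing over $|J|=n-d$ gives $K_d(C)=k-n+d+K_{n-d}(C^\perp)$, and differencing yields that exactly one of the increments $K_d(C)-K_{d-1}(C)$ and $K_{n-d+1}(C^\perp)-K_{n-d}(C^\perp)$ equals $1$, which is precisely the asserted partition of $\{1,\ldots,n\}$. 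Compared with citing \cite{wei}, your route has the advantage of fitting naturally into the RDLP machinery the paper already sets up in Section \ref{sec3}, making the duality transparent, at the cost of some extra bookkeeping (the $K_0$ convention, the edge cases $k=0,n$, and a careful proof of the puncturing–shortening duality) that the citation avoids.
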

Our first result is a strict improvement to Theorem \ref{theorem tsfasman}.
\begin{theorem} \label{bound 1}
  Let $({\mathcal{F}}_i)_{i=1}^\infty$ be an optimal tower of function fields over ${\mathbb{F}}_q$. Consider $R,\rho$ with $ 1/A(q) \leq R \leq 1 $ and $ \frac{q}{q-1}\frac{1}{A(q)}-\frac{1}{q-1}R \leq \rho \leq R $. Let $(C_i)_{i=1}^\infty$ be an optimal sequence of one-point algebraic geometric codes defined from $({\mathcal{F}}_i)_{i=1}^\infty$ such that $\dim C_i/n_i \rightarrow R$. There exists a sequence of positive integers $(m_i)_{i=1}^\infty$ such that $m_i/n_i \rightarrow \rho$ and  $ d_{m_i}(C_i)/n_i \rightarrow \delta =1-R+\rho$.
\end{theorem}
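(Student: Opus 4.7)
The plan is to combine Wei's duality (Lemma~\ref{proweiduality}) with Lemma~\ref{cor2} applied to the \emph{dual} codes $C_i^\perp$, together with the Goppa bound. The starting observation is a purely combinatorial consequence of Wei duality: for any linear $[n,k]$ code $C$ and any integers $m, s \geq 1$ with $d_s(C^\perp) \geq k + 1 - m + s$, one has $d_m(C) \geq n - k + m - s + 1$. Indeed, writing $v_t := n + 1 - d_t(C^\perp)$, the hypothesis together with strict monotonicity of $(d_t(C^\perp))_t$ gives $v_t \leq n - k + m - s$ for all $t \geq s$, so $[1, n - k + m - s]$ contains at least $n - k - s + 1$ ``forbidden'' values. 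Since $\{d_r(C)\}_{r=1}^k$ and $\{v_t\}_{t=1}^{n-k}$ partition $\{1, \ldots, n\}$ by Lemma~\ref{proweiduality}, that interval then contains at most $m - 1$ generalized Hamming weights of $C$, forcing $d_m(C) > n - k + m - s$.

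Applying this with $C = C_i$ and invoking the Goppa bound $d_1(C_i^\perp) \geq k_i - g_i + 1$ together with Lemma~\ref{cor2}, I would obtain $d_s(C_i^\perp) \geq (k_i - g_i + 1)\, \alpha_s$, where $\alpha_s := (q^s - 1)/(q^{s-1}(q-1))$. Hence a sufficient condition for $d_m(C_i) \geq n_i - k_i + m - s + 1$ is $m > k_i + s - (k_i - g_i + 1)\alpha_s$. Dividing by $n_i$ and using optimality of the tower ($k_i/n_i \to R$, $g_i/n_i \to 1/A(q)$), the right-hand side has limit $R - (R - 1/A(q))\alpha_s$. This quantity is monotone decreasing in $s$, and since $\alpha_s \nearrow q/(q-1)$ as $s \to \infty$, it converges to
\[ \rho_0 := R - (R - 1/A(q)) \cdot \tfrac{q}{q-1} = \tfrac{q}{q-1}\cdot\tfrac{1}{A(q)} - \tfrac{1}{q-1} R, \]
which is exactly the lower threshold imposed on $\rho$ in the statement.

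For any $\rho > \rho_0$, I would fix $s^*$ large enough that $R - (R - 1/A(q))\alpha_{s^*} < \rho$, and set $m_i := \lceil \rho n_i \rceil$. Then for all sufficiently large $i$ the sufficient condition is met, yielding $d_{m_i}(C_i) \geq n_i - k_i + m_i - s^* + 1$. Dividing by $n_i$ and combining with the Singleton upper bound $d_{m_i}(C_i) \leq n_i - k_i + m_i$, one concludes $d_{m_i}(C_i)/n_i \to 1 - R + \rho$, as required.

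The main obstacle will be the boundary case $\rho = \rho_0$, because no finite $s$ makes the threshold strictly smaller than $\rho_0$ (in fact, the asymptotic slack $(R - 1/A(q)) q^{-(s-1)}/(q-1)$ has the wrong sign to absorb the $s/n_i$ error term). I would handle this via a standard diagonalization: apply the previous paragraph to each $\rho_j := \rho_0 + 1/j$ to obtain sequences $(m_i^{(j)})_i$ with $m_i^{(j)}/n_i \to \rho_j$ and $d_{m_i^{(j)}}(C_i)/n_i \to 1 - R + \rho_j$; then set $m_i := m_i^{(j(i))}$ for a sufficiently slowly growing $j(i) \to \infty$, so that $m_i/n_i \to \rho_0$ and (using that the corresponding $s^*(j(i))$ is fixed for each $j$ while $n_i \to \infty$) $d_{m_i}(C_i)/n_i \to 1 - R + \rho_0$.
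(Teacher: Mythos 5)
Your proposal is correct and follows essentially the same route as the paper: Goppa's bound and Lemma~\ref{cor2} applied to $C_i^\perp$, then Wei duality (Lemma~\ref{proweiduality}) with a counting argument to transfer the dual estimate back to $d_{m_i}(C_i)$, and the Singleton bound to pin down the limit. The only cosmetic difference is that the paper lets the dual index $f(i)\rightarrow\infty$ with $f(i)/n_i\rightarrow 0$ from the outset, which handles the boundary value $\rho=\frac{q}{q-1}\frac{1}{A(q)}-\frac{1}{q-1}R$ directly; your fixed-$s^*$ argument plus diagonalization reconstructs exactly this choice.
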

\begin{proof}
In this proof we use the notation $ k_i = \dim C_i $. Let $ f:{\mathbb{N}} \rightarrow {\mathbb{N}} $ be a function such that $ f(i) \rightarrow \infty $ and $ f(i)/n_i \rightarrow 0 $, as $ i
\rightarrow \infty $. Now fix $ i $. The Goppa bound (Theorem \ref{goppa bound}) together with Lemma \ref{cor2} tell us that
$$ d_{f(i)}(C_i^\perp) \geq \frac{q^{f(i)}-1}{q^{f(i)}-q^{f(i)-1}}(k_i-g(\mathcal{F}_i)). $$
Write $ h(i) $ for the right-hand side, that is, $ d_{f(i)}(C_i^\perp)
\geq \lceil h(i) \rceil $. Observe that $ h(i) > 0 $, since asymptotically $ k_i > g(\mathcal{F}_i) $. If we write $ d_s^\perp = d_s(C_i^\perp) $ for $ 1 \leq s \leq n_i - k_i $, we have that $ n_i + 1 - \lceil h(i) \rceil \geq n_i + 1 - d_{f(i)}^\perp $. From this inequality and the monotonicity of GHWs, it follows that the sets
$$ \{ n_i + 1 - \lceil h(i) \rceil, n_i + 2 - \lceil h(i) \rceil, \ldots, n_i \} \textrm{ and} $$
$$ \{ n_i + 1 - d_{n_i - k_i}^\perp, n_i + 1 - d_{n_i - k_i - 1}^\perp, \ldots, n_i + 1 - d_{f(i) + 1}^\perp \} $$
are disjoint. Therefore, from Lemma \ref{proweiduality} it follows that
\begin{equation}
d_{k_i - \lceil h(i) \rceil + f(i)}(C_i) \geq n_i + 1 - \lceil h(i) \rceil.
\label{almost done}
\end{equation}
Now take a sequence of positive integers $ (m_i)_{i=1}^\infty $ such that 
\begin{equation}
k_i - \lceil h(i) \rceil + f(i) \leq m_i \leq k_i
\label{condition m_i}
\end{equation}
(observe that the left-hand side is smaller than $ k_i $ for large $ i
$). From (\ref{almost done}), (\ref{condition m_i}) and the
monotonicity of GHWs we get
\begin{equation}
\begin{split}
d_{m_i}(C_i) & \geq d_{k_i - \lceil h(i) \rceil + f(i)}(C_i) + m_i - k_i + \lceil h(i) \rceil - f(i) \\
 & \geq n_i-k_i + m_i - f(i) + 1.
\label{ineq ineq}
\end{split}
\end{equation}
Dividing by $ n_i $ and letting $ i \rightarrow \infty $,
(\ref{condition m_i}) and (\ref{ineq ineq}) become
$$ \frac{q}{q-1}\frac{1}{A(q)}-\frac{1}{q-1}R \leq \rho \leq R, $$
$$ \delta = \lim_{i \rightarrow \infty} \frac{d_{m_i}(C_i)}{n_i} = 1-R+\rho. $$
\end{proof}

We have the following result for lower values of $ \rho $.
\begin{theorem} \label{bound 2}
  Let $({\mathcal{F}}_i)_{i=1}^\infty$ be an optimal tower of function fields over ${\mathbb{F}}_q$. Consider $R, \rho$ with $ 0 \leq \rho \leq R \leq 1 $.
  Let $(C_i)_{i=1}^\infty$ be an optimal sequence of one-point algebraic geometric codes defined from $({\mathcal{F}}_i)_{i=1}^\infty$
such that $\dim C_i/n_i \rightarrow R$. For all sequences of positive integers $(m_i)_{i=1}^\infty$ with $m_i/n_i \rightarrow \rho$, the number $ \delta = \liminf_{i \rightarrow \infty} d_{m_i}(C_i) / n_i $ satisfies 
  $$ \delta \geq \frac{q}{q-1} \left( 1-R-\frac{1}{A(q)} \right) +\rho. $$
\end{theorem}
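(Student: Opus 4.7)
The strategy is to bootstrap from the first generalized Hamming weight $d_1(C_i)$ via the Griesmer-like inequality of Lemma~\ref{cor2}, from which the factor $q/(q-1)$ in the target bound will emerge, and to collect the additional $\rho$ contribution via monotonicity of GHWs. The key point is that the Griesmer factor $(q^s-1)/(q^s-q^{s-1})$ approaches $q/(q-1)$ only as $s \to \infty$, so we cannot apply Lemma~\ref{cor2} directly at weight $m_i$ if we want the full factor while keeping the additive contribution $\rho$; the fix is to apply it at an intermediate, slowly growing weight $s(i)$.

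Concretely, I would first fix an auxiliary sequence of positive integers $(s(i))_{i=1}^{\infty}$ satisfying $s(i) \to \infty$, $s(i)/n_i \to 0$, and $s(i) \leq m_i$ for all $i$ large enough; a choice such as $s(i) = \min\{m_i,\lceil \log_q n_i \rceil\}$ works once $m_i \to \infty$. By monotonicity of GHWs together with Lemma~\ref{cor2} applied to $C_i$ at weight $s(i)$,
$$ d_{m_i}(C_i)\;\geq\;d_{s(i)}(C_i) + (m_i - s(i))\;\geq\;d_1(C_i)\cdot\frac{q^{s(i)}-1}{q^{s(i)}-q^{s(i)-1}} + (m_i - s(i)). $$
I would then invoke the first part of the Goppa bound (Theorem~\ref{goppa bound}) to get $d_1(C_i) \geq n_i - \dim(C_i) + 1 - g(\mathcal{F}_i)$, which holds once $g(\mathcal{F}_i)\geq 1$ (ensured eventually by optimality of the tower).

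Finally, dividing by $n_i$ and passing to the liminf finishes the argument. Optimality of the tower and the codes yields $g(\mathcal{F}_i)/n_i = \bigl(g(\mathcal{F}_i)/N(\mathcal{F}_i)\bigr)\cdot\bigl(N(\mathcal{F}_i)/n_i\bigr) \to 1/A(q)$; by hypothesis $\dim(C_i)/n_i \to R$ and $(m_i - s(i))/n_i \to \rho$; and because $s(i) \to \infty$,
$$ \frac{q^{s(i)}-1}{q^{s(i)}-q^{s(i)-1}} = \frac{q}{q-1} - \frac{1}{(q-1)q^{s(i)-1}} \;\longrightarrow\;\frac{q}{q-1}. $$
Assembling these limits yields $\delta \geq \frac{q}{q-1}(1 - R - 1/A(q)) + \rho$. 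The main subtlety is the calibration of $s(i)$: it must diverge so that the Griesmer-like factor saturates at $q/(q-1)$, yet also satisfy $s(i)/n_i \to 0$ so that $(m_i - s(i))/n_i$ still tends to $\rho$, all while respecting $s(i) \leq m_i$. Degenerate cases where $m_i$ stays bounded (forcing $\rho=0$) or where $1 - R - 1/A(q) < 0$ (so the target bound is weaker than the trivial estimate $\delta \geq \rho$ coming from $d_{m_i}(C_i)\geq m_i$) are handled by specializing the same inequalities.
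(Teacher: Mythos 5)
Your proof is correct and essentially identical to the paper's: both combine the Goppa bound on $d_1$ (Theorem~\ref{goppa bound}), Lemma~\ref{cor2} applied at an intermediate index, and monotonicity of GHWs, the only difference being that you use a single slowly diverging index $s(i)$ and one passage to the limit, whereas the paper applies Lemma~\ref{cor2} at $\lceil \varepsilon m_i \rceil$ for fixed $\varepsilon$ and lets $\varepsilon \to 0$ afterwards. Note that in both versions the factor $q/(q-1)$ only materializes when $m_i \to \infty$ (automatic for $\rho > 0$), a caveat you at least flag explicitly while the paper leaves it implicit.
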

\begin{proof}
Let $ 0 < \varepsilon < 1 $ be an arbitrary fixed number. From the Goppa bound (Theorem \ref{goppa bound}) and Lemma \ref{cor2} we obtain that
$$ \frac{d_{\lceil \varepsilon m_i \rceil}(C_i)}{n_i} \geq
\frac{q^{\varepsilon m_i}-1}{q^{\varepsilon m_i}-q^{\varepsilon m_i
    -1}}\left( 1- \frac{\dim C_i }{n_i} -\frac{g_i}{n_i} \right). $$
Using again the monotonicity of GHWs we obtain that
\begin{equation}
\frac{d_{m_i}(C_i)}{n_i} \geq
\frac{q^{\varepsilon m_i}-1}{q^{\varepsilon
    m_i}-q^{\varepsilon m_i -1}} \left( 1- \frac{\dim C_i}{n_i} -\frac{g_i}{n_i} \right)+\frac{m_i(1-\varepsilon)}{n_i}.\nonumber
\end{equation}
Now, letting $ i \rightarrow \infty $ first and then $ \varepsilon \rightarrow 0 $, we obtain
$$ \delta = \liminf_{i \rightarrow \infty} \frac{d_{m_i}(C_i)}{n_i} \geq \frac{q}{q-1} \left( 1-R-\frac{1}{A(q)} \right) +\rho. $$
\end{proof}

In the following, we concentrate on  Garcia and Stichtenoth's second
tower~\cite{garcia1996asymptotic} of function fields
$({\mathcal{F}}_i)_{i=1}^\infty$ over ${\mathbb{F}}_q$ where $q$ is an
arbitrary perfect square. From~\cite{pellikaan1998weierstrass} we have
a complete description of the corresponding Weierstrass semigroups and
\cite{lowc} gives an efficient method for constructing the corresponding
 optimal sequences of one-point algebraic geometric codes.
We will apply the two new bounds on GHWs given in
Proposition~\ref{proposition1} and Proposition~\ref{proposition2} to
this tower. In the rest of this section, $ q $ is always a perfect square and  by $ (\mathcal{F}_i)_{i=1}^\infty $ we mean Garcia and Stichtenoth's second tower \cite{garcia1996asymptotic}. We will need the following properties of each $ \mathcal{F}_i $ (\cite{garcia1996asymptotic,pellikaan1998weierstrass}): its number of rational places satisfies $ N(\mathcal{F}_i) > q^{\frac{i-1}{2}} (q - \sqrt{q}) $, its genus is given by
$$
g({\mathcal{F}}_i)=\left\{ \begin{array}{ll} (q^{\frac{i}{4}}-1)^2 & {\mbox{ if $i$ is even,}}\\
 (q^{\frac{i+1}{4}}-1)(q^{\frac{i-1}{4}}-1) & {\mbox{ if $i$ is odd,}}\end{array} \right.
$$
and it has a rational place $ Q_i $ such that the conductor of $ H(Q_i) $ is given by
$$
c_i=\left\{ \begin{array}{ll}q^{i/2}-q^{i/4}& {\mbox{ if $i$ is even,}}\\
q^{i/2}-q^{(i+1)/4}& {\mbox{ if $i$ is odd.}}\end{array} \right.
$$
In the rest of the section, $(C_i)_{i=1}^\infty$ is an optimal sequence of one-point algebraic geometric codes defined from $({\mathcal{F}}_i)_{i=1}^\infty$, and where $C_i$ is of the form $C_{\mathcal{L}}(D_i,\mu_iQ_i)$ or $C_{\mathcal{L}}(D_i,\mu_iQ_i)^\perp$. Recall from~\cite{lowc} that we may assume without loss of generality that $D_i$ is chosen in such a way that $C_i$ can be constructed using ${\mathcal{O}}({n_i}^3\log_q^3(n_i))$ operations in ${\mathbb{F}}_q$.
\begin{theorem} \label{bound one-point}
  Let $({\mathcal{F}}_i)_{i=1}^\infty$ be Garcia-Stichtenoth's second tower of function fields over ${\mathbb{F}}_q$, where $q$ is a perfect square. Let $(C_i)_{i=1}^\infty$ be a corresponding optimal sequence of one-point algebraic geometric codes as described above.
  Consider $R,\rho$ with $ 0 \leq R \leq 1 - \frac{1}{\sqrt{q}-1}$ and $ 0\leq \rho \leq \min \{R, \frac{1}{\sqrt{q}-1} \} $, and assume that $\dim C_i /n_i \rightarrow R$. For all sequences of positive integers $(m_i)_{i=1}^\infty$ with $m_i/n_i \rightarrow \rho$, it holds that  $ \delta = \liminf_{i \rightarrow \infty} d_{m_i}(C_i) / n_i $ satisfies 
\begin{equation*}
  \delta \geq 1 - R + 2\rho - \frac{1}{\sqrt{q}-1}.
  \end{equation*}
\end{theorem}
\begin{proof}
We may assume that $ C_i $ is of the form $ C_\mathcal{L}(D_i,\mu_i Q_i) $ or $ C_\mathcal{L}(D_i,\mu_i Q_i)^\perp $, with $ 2g({\mathcal{F}}_i) -2 < \mu_{i} < n_{i} $ and $ (\mu_{i}- g({\mathcal{F}}_i))/n_i \rightarrow R $. As 
$ \lim_{i \rightarrow \infty} c_i/n_i = \lim_{i \rightarrow \infty}
g({\mathcal{F}}_i)/n_i = 1/(\sqrt{q}-1), $ 
the result follows from Proposition \ref{proposition1} or Proposition~\ref{proposition2}.
\end{proof}
\section{The parameters $\Lambda^{(1)}(\varepsilon_1)$ and
  $\Lambda^{(2)}(\varepsilon_2)$ for algebraic geometric code based schemes} \label{sec7}
In Section~\ref{secnewas} we estimated $\Omega^{(1)}$ and
$\Omega^{(2)}$ for asymptotically good sequences of schemes based on algebraic geometric
codes coming from optimal towers of function fields, the sequences
being called asymptotically good if $\Omega^{(1)} > 0$ and
$\Omega^{(2)} < 1$. Employing the
analysis in Section~\ref{sec5} together with~(\ref{eqlambda1bnd2}) and
(\ref{eqlambda2bnd2}) we are now able to give a more complete
picture of the information leakage and reconstruction by providing
also estimates on $\Lambda^{(1)}(\varepsilon_1)$ and
$\Lambda^{(2)}(\varepsilon_2)$. We emphasize that the below theorems apply
also in the cases where one or both of the conditions $\Omega^{(1)}>0$
and $\Omega^{(2)}<1$ fails to hold. Throughout the section recall
that by definition the numbers $\varepsilon_1$ and $\varepsilon_2$ always satisfy $0 \leq
\varepsilon_1, \varepsilon_2 \leq 1$.

\begin{theorem} \label{final1}
For the sequence of linear ramp secret sharing schemes described in
Theorem~\ref{theomega} we have the following estimates: If $1/A(q)
\leq 1-R_2$ and $\varepsilon_1 \geq \big( \frac{q}{q-1}
\frac{1}{A(q)}-\frac{1}{q-1}(1-R_2)\big) / L$ then
$\Lambda^{(1)}(\varepsilon_1) \geq R_2+\varepsilon_1 L$. If $1/A(q) \leq
R_1$ and $\varepsilon_2 \geq \big(
\frac{q}{q-1}\frac{1}{A(q)}-\frac{1}{q-1}R_1 \big) / L$ then
$\Lambda^{(2)}(\varepsilon_2) \leq R_1-\varepsilon_2L$.
\end{theorem}
\begin{proof}
Apply Theorem~\ref{bound 1} with $\rho=\varepsilon_1 L$ and
$\rho=\varepsilon_2 L$, respectively, in combination
with~(\ref{eqlambda1bnd2}) and (\ref{eqlambda2bnd2}), respectively.
\end{proof}
\begin{theorem}\label{final2}
 For the sequence of linear ramp secret sharing schemes described in
Theorem~\ref{theomega} we have the following estimates: 
$\Lambda^{(1)}(\varepsilon_1) \geq \frac{q}{q-1} (R_2 -
\frac{1}{A(q)})+\varepsilon_1L$ and $\Lambda^{(2)}(\varepsilon_2) \leq
\frac{q}{q-1}(R_1+\frac{1}{A(q)})-\frac{1}{q-1}-\varepsilon_2 L$.
\end{theorem}
\begin{proof}
Apply Theorem~\ref{bound 2}  in combination
with~(\ref{eqlambda1bnd2}) and (\ref{eqlambda2bnd2}).
\end{proof}
Observe that from Theorem~\ref{final2} we get an estimate on
$\Lambda^{(1)}(0)$ wich is $q/(q-1)$ times as large as the estimate on
$\Omega^{(1)}$ in Section~\ref{secnewas}. Hence, the studied sequences of
secret sharing schemes are more secure than previously
anticipated. A similar remark holds regarding reconstruction.  
\begin{theorem}\label{final3}
Let $q$ be a perfect square.  For the sequence of linear ramp secret sharing schemes described in
Theorem~\ref{theomega} we have the following estimates:
If $R_2 \geq 1/ (\sqrt{q}-1)$ and $\varepsilon_1 \leq
\frac{1}{\sqrt{q}-1}\frac{1}{L}$ then $\Lambda^{(1)}(\varepsilon_1) \geq
R_2+2\varepsilon_1L-\frac{1}{\sqrt{q}-1}$. If $R_1 \leq
1-\frac{1}{\sqrt{q}-1}$ and $\varepsilon_2 \leq
\frac{1}{\sqrt{q}-1}\frac{1}{L}$ then $\Lambda^{(2)}(\varepsilon_2) \leq
R_1-2 \varepsilon_2 L + \frac{1}{\sqrt{q}-1}$. The $i$-th scheme in the
sequence can be constructed using ${\mathcal{O}}(n_i^3 \log (n_i)^3)$
operations in ${\mathbb{F}}_q$.
\end{theorem}
\begin{proof}
Apply Theorem~\ref{bound one-point}  in combination
with~(\ref{eqlambda1bnd2}) and (\ref{eqlambda2bnd2}).
\end{proof}

We finally remark that when $ q $ is a perfect square, then similarly
to Theorem~\ref{final3}, one can assume in Theorem~\ref{final1} and
Theorem~\ref{final2} that the $ i $-th scheme in the sequence can be
constructed using ${\mathcal{O}}(n_i^3\log (n_i)^3)$ operations in
${\mathbb{F}}_q$.
\section*{Acknowledgments}
The authors gratefully acknowledge the support from The Danish Council
for Independent Research (DFF-4002-00367), from the Spanish 
MINECO/FEDER (MTM2015-65764-C3-2-P), from Japan Society for the Promotion of Science (23246071 and 26289116),
from the Villum Foundation through their VELUX Visiting Professor Programme 2013-2014,
and from the ``Program for Promoting the Enhancement of Research
Universities'' at Tokyo Institute of Technology. They also thank I.\
Cascudo and R.\ Cramer for helpful discussions.


\appendix

\section{Proof of Theorem~\ref{th extended}} \label{app1}

In this appendix we give a proof of Theorem~\ref{th extended}. The
theorem is an improvement of~\cite[Th.\ 9]{aspar},
the improvement stating that the RGHWs of primary and dual nested
linear code pairs can get {\textit{simultaneously}} asymptotically as
close to the Singleton bound (\ref{singleton bound}) as wanted. We use the notation and results in \cite{Goldman,luo,GV,aspar}. In particular, we use the concept of relative dimension length profile (RDLP) as appears in \cite[Sec.\ III]{luo}. For $ 1 \leq d \leq n $, and linear codes $ C_2 \subsetneq C_1 \subset \mathbb{F}_q^n $ define 
\begin{eqnarray} 
K_d(C_1,C_2) &= &\max \{ \dim (C_1
\cap V_I) - \dim(C_2 \cap V_I) \mid \nonumber \\
&&{\mbox{ \hspace{1.8cm} }}I \subset \{ 1, \ldots, n \}, \#
  I = d \},\nonumber
\end{eqnarray}
where $ V_I = \{ \mathbf{x} \in \mathbb{F}_q^n \mid x_i = 0 \textrm{ if } i \notin I \} $. The sequence $ (K_d(C_1,C_2))_{d=1}^n $ is then the RDLP of the pair $ C_2 \varsubsetneq C_1 $ and is known to be non-decreasing \cite[Prop.\ 1]{luo}. Our interest in the RDLP comes from the following result corresponding to the first part of~\cite[Th.\ 3]{luo}:
\begin{equation}
M_m(C_1,C_2)=\min \{ d \mid K_d(C_1,C_2) \geq m\}. \label{eqnyogvigtig}
\end{equation}
As in \cite{Goldman,GV}, we define 
for integers $ a,u,v,w $ the numbers:
$$ N_1(w,u) = \frac{\prod _{i=0}^{u-1} (q^w - q^i)}{\prod _{i=0}^{u-1} (q^u - q^i)}, \quad N_2(w,u,v) = \frac{\prod _{i=0}^{v-1} (q^w - q^{u+i})}{\prod _{i=0}^{v-1} (q^v - q^i)}, $$
and  $N_3(w,u,v,a) = N_1(u,a) N_2(w-a,u-a,v-a)$. 
The meaning of $N_1$ is \cite{Goldman},
\cite[Lem.\ 5 and 6]{GV}:

\begin{lemma}\label{lemitems}
Let $W$ be an ${\mathbb{F}}_q$-linear vector space and let $ u $, $ v
$, $ w = \dim W $ be non-negative integers. If $ u \leq w $, then
  $N_1(w,u)$ is the
  number of subspaces $U \subset W$ of dimension $u$. Furthermore, if $U$
  is fixed and $ u \leq v \leq w $, then $N_1(w-u,v-u)$ is the number of
  ${\mathbb{F}}_q$-linear vector spaces $V$ such that $U\subset V \subset W$
  and $\dim V =v$.
\end{lemma}

From \cite[Lem.\ 9]{GV} we have:
\begin{lemma}\label{ryu1}
Consider fixed integers $ 1 \leq k_2 < k_1 < n $ and a fixed set $I
\subset \{ 1,  \ldots, n\}$ with $\# I=d$. Let $s$ be an integer with
$s \leq \min \{d,k_1-k_2\}$. The number of linear code pairs
$C_2 \varsubsetneq C_1 \subset \mathbb{F}_q^n$
such that $\dim C_1 = k_1$, $\dim C_2=k_2$,
and $\dim (C_1 \cap V_I) - \dim (C_2 \cap V_I) = s $, equals
\begin{eqnarray}
{\mbox{\ \hspace{-0.8cm}}}N_4(n,k_1,k_2,d,s)=\sum_{a=0}^{   \min \{ d-s,
k_1-s,k_2\}  }
\bigg( N_1(d,a) {\mbox{ \hspace{2.3cm}}} 
\nonumber \\
{\mbox{\ \hspace{-0.8cm}}} N_2(n-a,d-a,k_2-a)
N_3 (n-k_2, d-a, k_1-k_2, s)\bigg).\nonumber
\end{eqnarray}
\end{lemma}

We next extend~\cite[Cor.\ 3]{GV}.

\begin{theorem}\label{ryu3}
Consider fixed integers $1 \leq k_2 < k_1 < n$, $ 1 \leq d \leq n $, $ 1 \leq d^\perp \leq n $,
$1 \leq s \leq \min\{d$, $k_1-k_2\}$, and $1 \leq s^\perp \leq 
\min\{d^\perp$, 
$k_1-k_2\}$. There exists a nested linear code pair
$C_2 \subsetneq C_1 \subset \mathbb{F}_q^n$
such that 
$\dim C_1 = k_1$, $\dim C_2=k_2$,
$M_s(C_1, C_2) > d$ and
$M_{s^\perp}(C_2^\perp, C_1^\perp) > d^\perp$,
if
\begin{eqnarray}
{\mbox{ \ \hspace{-0.6cm}}}N_1(n,k_2) N_1(n-k_2, k_1-k_2) > {n \choose d}
\sum_{\sigma=s}^{k_1-k_2}  N_4(n,k_1,k_2,d,\sigma) {\mbox{
  \hspace{0.2cm}}} \nonumber \\
+ {n \choose d^\perp}
\sum_{\sigma^\perp=s^\perp}^{k_1-k_2}N_4(n,n-k_2,n-k_1,d^\perp ,\sigma^\perp).\nonumber
\end{eqnarray}
\end{theorem}
\begin{proof}
By Lemma~\ref{lemitems}, the term $N_1(n,k_2) N_1(n-k_2, k_1-k_2)$
is the total number of pairs
$C_2 \subsetneq C_1 \subset \mathbb{F}_q^n$
such that $\dim C_1 = k_1$ and $\dim C_2=k_2$. On the other hand, by Lemma~\ref{ryu1}, the number of pairs
$C_2 \subsetneq C_1 \subset \mathbb{F}_q^n$
such that $\dim C_1 = k_1$, $\dim C_2=k_2$ and $K_d(C_1,C_2) \geq
s$ is at most 
${n \choose d}
\sum_{\sigma=s}^{k_1-k_2}  N_4(n,k_1,k_2,d,\sigma)$. Similarly, the number of pairs
$C_2 \subsetneq C_1 \subset \mathbb{F}_q^n$
such that $\dim C_1 = k_1$, $\dim C_2=k_2$ and
$K_{d^\perp}(C_2^\perp ,C_1^\perp )\geq s^\perp$ is at most 
${n \choose d^\perp}
\sum_{\sigma^\perp=s^\perp}^{k_1-k_2}N_4(n,n-k_2,n-k_1,d^\perp
,\sigma^\perp)$. The inequality therefore ensures the
existence of a code pair $C_2 \subsetneq C_1 \subset
\mathbb{F}_q^n$ with $\dim C_1 =k_1$, $\dim C_2 = k_2$,  $K_{d}(C_1,C_2) < s$ and $K_{d^\perp}(C_2^\perp, C_1^\perp) < s^\perp$. But the RDLP is non-decreasing
  and $K_{n}(C_1,C_2)=K_n(C_2^\perp,C_1^\perp)=k_1-k_2$ which is
    larger than or equal to $s$ and $s^\prime$. Therefore there exists
    a smallest index $j$ such that $K_j(C_1,C_2) \geq s$ and a
    smallest index $j^\perp$ such that $K_{j^\perp}(C_2^\perp ,
    C_1^\perp ) \geq s^\perp$ and $j > d$ as well as $j^\perp >
    d^\perp$ hold. The theorem now follows
from~(\ref{eqnyogvigtig}). \end{proof} 

To apply Theorem~\ref{ryu3} in an asymptotic setting we will need a
couple of lemmas.
\begin{lemma}\label{lemA}
Define $\pi (q)=\prod_{i=1}^\infty (1-q^{-i})$. Then
\begin{eqnarray}
&\pi(q) q^{u(w-u)} \leq N_1(w,u) \leq
  \pi(q)^{-1}q^{u(w-u)}, \label{eqell6}\\
&N_2(w,u,v) \leq \pi (q)^{-1} q^{v(w-v)}, \nonumber \\
&N_3(w,u,v,a) \leq \pi (q)^{-2} q^{a(u-a)}q^{(v-a)(w-v)}. \label{eqell8}
\end{eqnarray}
\end{lemma}
\begin{proof}
The inequality~(\ref{eqell6}) is \cite[Cor.\ 2]{kilde5} and the last
two inequalities correspond to~\cite[Lem.\ 3]{aspar} except that 
$\pi(q)^{-2}$ in~(\ref{eqell8}) by a mistake was there
written $\pi(q)^{-1}$ and similarly $q^{a(u-a)}$ was written $q^{u(u-a)}$. 
\end{proof}
The next lemma corresponds to~\cite[Ex.\ 11.1.3]{cover}.
\begin{lemma}\label{lemB}
\hspace{-0.1cm}Let $H_q(x)=-x \log_q (x)-(1-x)\log_q(1-x)$, then 
$$\frac{1}{n+1} q^{n H_q(m/n)} \leq {n \choose m} \leq q^{nH_q(m/n)}.$$
\end{lemma}
With the above machinery we can now give the promised proof.\\

\noindent {\bf{Proof of Theorem~\ref{th extended}. \ }}
Let $R_1$, $R_2$, $\delta$, $\delta^\perp$, $\tau$ and $\tau^\perp$ be
as in the theorem (in particular assume 
(\ref{eqnummer8}) to hold). Let $(n_i)_{i=1}^\infty$ be a strictly
increasing sequence of positive integers and define $k_1(i)=\lfloor n_i R_1\rfloor$, 
$k_2(i)=\lceil n_i R_2\rceil$, $s(i)=\lceil  n_i \tau \rceil$,
$s^\perp(i)=\lceil n_i \tau^\perp  \rceil$, $d(i)=\lfloor  n_i \delta
\rfloor$ and $d^\perp(i)=\lfloor n_i \delta^\perp 
\rfloor$. Using Theorem~\ref{ryu3}, we will show that for $i$ large
enough there exist nested linear codes $C_2(i) \subsetneq C_1(i) \subset
{\mathbb{F}}_q^{n_i}$ of dimensions $k_2(i)$ and $k_1(i)$, respectively, with
\begin{equation}
M_{s(i)} \geq d(i), \quad \textrm{and} \quad M_{s^\perp(i)} \geq d^\perp (i). \label{eqsnabel1plus2} 
\end{equation}
Observe that (\ref{eqnummer8}) implies that
\begin{equation}
k_1(i) +d(i)-n_i-s(i) < 0, \label{stjerne1}
\end{equation}
\begin{equation}
(n_i-k_2(i))+d^\perp (i) -n_i-s^\perp (i)<0, \label{stjerne2}
\end{equation}
which we will need later in the proof. For brevity, we will write $k_1$, $k_2$, $d$, $d^\perp$, $s$,
$s^\perp$, and $n$ rather than $k_1(i)$, $k_2(i)$, $d(i)$, $d^\perp
(i)$, $s(i)$, $s^\perp (i)$, and $n_i$. Applying Lemma~\ref{lemA},
Lemma~\ref{lemB}, and Theorem~\ref{ryu3} we see that a sufficient condition for the existence of a linear code pair satisfying (\ref{eqsnabel1plus2}) is
\begin{eqnarray}
{\mbox{ \hspace{-0.8cm}}}&&\pi (q)^2 q^{k_2(n-k_2)}q^{(k_1-k_2)(n-k_1)}\nonumber \\
{\mbox{ \hspace{-0.8cm}}}&>&q^{nH_q(d/n)}\sum_{\sigma =s}^{k_1-k_2} \sum_{a=0}^{\min \{
    d-\sigma, k_1-\sigma, k_2\}} \bigg[\pi(q)^{-1} q^{a(d-a)}
    \nonumber \\
{\mbox{ \hspace{-0.8cm}}}&&\pi(q)^{-1}
   q^{(k_2-a)(n-a-k_2+a)}\pi(q)^{-2}q^{\sigma(d-a-\sigma)}q^{(k_1-k_2-\sigma)(n-k_2-k_1+k_2)}
   \bigg] \nonumber \\
{\mbox{ \hspace{-0.8cm}}}&&+q^{nH_q(d^\perp /n)}\sum_{\sigma^\perp = s^\perp}^{k_1-k_2}
   \sum_{a=0}^{\min \{d^\perp -\sigma^\perp, n-k_2-\sigma^\perp,
   n-k_1\}}\bigg[ \pi(q)^{-1} q^{a(d^\perp -a)} \nonumber \\
{\mbox{ \hspace{-0.8cm}}}
&&\pi (q)^{-1}q^{(n-k_1-a)(n-a-n+k_1+a)}\pi(q)^{-2}q^{\sigma^\perp (d^\perp
   -a -\sigma^\perp)}q^{(k_1-k_2-\sigma^\perp)k_2}\bigg]. \nonumber
\end{eqnarray}
But then another sufficient condition (named
Condition A) for the existence of a nested code pair satisfying
(\ref{eqsnabel1plus2}) is
\begin{eqnarray}
{\mbox{\hspace{-0.8cm}}}&&q^{k_2(n-k_2)+(k_1-k_2)(n-k_1)} >\nonumber \\
{\mbox{\hspace{-0.8cm}}}&& f(q,n) \max \bigg\{ q^{a(d-a)+(k_2-a)(n-k_2)+\sigma
    (d-a-\sigma)+(k_1-k_2-\sigma)(n-k_1)} \bigm| \nonumber \\
{\mbox{\hspace{-0.8cm}}}&& {\mbox{ \hspace{0.9cm} }} s \leq \sigma  \leq k_1-k_2, 0 \leq a \leq \min
\{d-\sigma, k_1-\sigma,k_2 \} \bigg\} + \nonumber \\
{\mbox{\hspace{-0.8cm}}}&&f^\perp (q,n) \max \bigg\{
   q^{a(d^\perp-a)+(n-k_1-a)k_1+\sigma^\perp (d^\perp - a -
   \sigma^\perp)+(k_1-k_2-\sigma^\perp)k_2} \bigm|  \nonumber \\
{\mbox{\hspace{-0.8cm}}}&&{\mbox{ \hspace{1.8cm} }} s^\perp \leq \sigma^\perp
\leq k_1-k_2, {\mbox{ and }}\nonumber \\
{\mbox{\hspace{-0.8cm}}}&& {\mbox{\hspace{2cm}}} 0 \leq a \leq \min
\{d^\perp-\sigma^\perp, n-k_2-\sigma^\perp,n-k_1\} \bigg\}, \nonumber
\end{eqnarray}
where $f(q,n)=\pi (q)^{-6}q^{nH_q(d/n)}n^2$, and  where $f^\perp(q,n)=\pi (q)^{-6}q^{nH_q(d^\perp /n)}n^2$. 
Consider now the expression $\sigma(k_1+d-n-\sigma-a)$, which contains
the terms in the first exponent on the right-hand side of Condition
A related to $\sigma$. As a function in $\sigma$, this is a downward
parabola intersecting the first axis in $\sigma=0$. For $s \leq
\sigma$, it follows from~(\ref{stjerne1}) and $0 \leq a$ that
$k_1+d-n-\sigma - a <0$. Hence, the maximal value of $\sigma
(k_1+d-n-\sigma-a)$ for $s \leq \sigma$ is attained when $\sigma =s$,
and we therefore substitute $\sigma$ with $s$ in Condition A. In a
similar fashion, we see from~(\ref{stjerne2}) that $\sigma^\perp$ can
be replaced with $s^\perp$. After these substitutions, the terms
related to $a$ in the first exponent on the right-hand side of
Condition A become $-a^2+a(k_2+d-n-s)$, which is equal to $0$ for $a=0$
and negative for $a >0$, as a consequence
of~(\ref{stjerne1}). Similarly, the terms related to $a$ in the last
exponent on the right-hand side become $-a^2+a(d^\perp - k_1-s^\perp)$
which again is equal to $0$ for $a=0$ and negative for $a >0$ as a
consequence of~(\ref{stjerne2}). Hence, we can substitute $a$ with $0$
in Condition A. After the above substitutions, Condition A simplifies to
\begin{eqnarray}
{\mbox{\hspace{-0.6cm}}} q^{k_2(n-k_2)+(k_1-k_2)(n-k_1)} &&> f(q,n) q^{k_2(n-k_2)+s
    (d-s)+(k_1-k_2-s)(n-k_1)} \nonumber \\
&&{\mbox{ {\hspace{-0.2cm} }}}+f^\perp (q,n)
   q^{(n-k_1)k_1+s^\perp (d^\perp - 
   s^\perp)+(k_1-k_2-s^\perp)k_2}. \nonumber 
\end{eqnarray}
In this formula, we now replace the two
expressions on the right-hand side with the largest one multiplied
by 2. We then take the logarithm over $ q $ and finally divide by $n^2$. Assume that the first term on the right-hand side of Condition A is greater than or equal to the last term. After simplifying equal terms on both sides and using the definition of $ k_1 $, $ d $ and $ s $, we see that Condition A holds if
\begin{equation}
0 > g(i)+\tau(\delta - \tau)-\tau(1-R_1), \label{eqff1}
\end{equation}
where $ g(i) = \log_q(2f(q,n_i)) / n_i^2 $, which goes to $ 0 $ as $ i $ goes to infinity. Similarly, if the last term on the right-hand
side is greater than or equal to the first term, we see that Condition A holds if
\begin{equation}
0 > g^\perp(i) + \tau^\perp (\delta^\perp -\tau^\perp) - \tau^\perp R_2, \label{eqff2}
\end{equation}
where $ g^\perp(i) = \log_q(2f^\perp(q,n_i)) / n_i^2 $, which again
goes to $ 0 $ as $ i $ goes to infinity. Finally, for $ i $ large enough, (\ref{eqff1}) follows from the first
part of (\ref{eqnummer8}), since $ \tau > 0 $, and (\ref{eqff2})
follows from the last part of (\ref{eqnummer8}), since $ \tau^\perp >
0 $. Therefore, Condition A holds for $ i $ large enough and we are
done. {\hfill{$\Box$}}

\end{document}